\tikzstyle{proc}=[circle, draw, inner sep=0pt,
\tikzstyle{proc1}=[circle, draw, inner sep=0pt,
\tikzstyle{proc2}=[circle, draw, inner sep=0pt,
\tikzstyle{n}=[fill=black]
\tikzstyle{b}=[fill=white]
\tikzstyle{g}=[fill=gray]
\let\oldnl\nl%
\newcommand{\nonl}{\renewcommand{\nl}{\let\nl\oldnl}}%
\renewcommand{\algocf@caption@boxruled}{%
  \hrule
  \hbox to \hsize{%
    \vrule\hskip-0.4pt
    \vbox{
       \vskip\interspacetitleboxruled%
       \unhbox\algocf@capbox\hfill
       \vskip\interspacetitleboxruled
       }%
     \hskip-0.4pt\vrule%
   }\nointerlineskip%
}
\newcommand{\macro}[2]{ \providecommand{#1}{{{\ensuremath{#2}}}\xspace}}
\newcommand{\rmacro}[2]{ \renewcommand{#1}{{{\ensuremath{#2}}}\xspace}}
\macro{\N}{\mathbb N}
\macro{\Z}{\mathbb Z}
\macro{\R}{\mathbb R}
\macro{\noir}{\bullet}
\macro{\blanc}{\circ}
\macro{\lnoir}{\blanc\!\!\rightarrow\!\!\noir}
\macro{\lblanc}{\blanc\!\!\leftarrow\!\!\noir}
\macro{\lall}{\blanc\!\!-\!\!\!\!-\!\;\!\noir}
\macro{\lok}{\blanc\!\!\leftrightarrow\!\!\noir}
\macro{\proc}{p}%
\macro{\lproc}{\odot}
\macro{\Tnoir}{\{\lok, \lnoir\}^\omega}
\macro{\Tblanc}{\{\lok, \lblanc\}^\omega}
\macro{\Szero}{\{\lok\}^\omega}
\macro{\Cun}{\{\lok\}^\omega \cup \{\lok\}^*\{\lblanc\}^\omega \cup \{\lok\}^*\{\lnoir\}^\omega}
\macro{\Sun}{\{\lok, \lblanc\}^\omega \cup \{\lok, \lnoir\}^\omega}
\macro{\Run}{\{\lok, \lblanc, \lnoir\}^\omega}
\macro{\algo}{\mathcal{A}}
\macro{\SymbSimplex}{\mathcal{S}}
\macro{\emptyset}{\o}
\macro{\coloneqq}{:=}
\macro{\compos}{.}
\macro{\cr}{cr}
\newcommand{\ie}{\emph{i.e.} }
\macro{\TS}{\Phi}  %
\rmacro{\b}{\circ}
\macro{\n}{\bullet}
\macro{\g}{\textcolor{gray}{\bullet}}
\macro{\Go}{\Gamma^\omega}
\macro{\I}{\mathcal{I}}
\newcommand{\eg}{\emph{e.g.} }
\macro{\F}{\mathcal F}
\renewcommand{\O}{\mathcal{O}}
\rmacro{\L}{\mathcal L}
\newcounter{theoenumcounter}
\def\theoenumlabel{\thetheorem.\roman{theoenumcounter}}
\newenvironment{theoenum}{\begin{list}{\theoenumlabel}{%
      \usecounter{theoenumcounter}
}      %
  }{\end{list}}
\begin{document}
\pagestyle{plain}
\thispagestyle{empty}

\title{Back to the Coordinated Attack Problem}

\author{
  Emmanuel Godard \and Eloi Perdereau\\
\texttt{(emmanuel.godard|eloi.perdereau)@lis-lab.fr}
}
\institute{Laboratoire d'Informatique et Syst{\`e}mes\\
Aix-Marseille Universit{\'e} -- CNRS (UMR 7020)}
\date{ }
\pagestyle{plain}
\maketitle

\begin{abstract}
We consider the well known Coordinated Attack Problem, where two
generals have to decide on a common attack, when their messengers can
be captured by the enemy.
Informally, this  problem represents the difficulties to agree in the
presence of communication faults.
We consider here only omission faults
(loss of message), but contrary to previous studies, we do not to
restrict the way messages can be lost, \ie we make no specific assumption,
we use no specific failure metric.

In the large subclass of message
adversaries where the %
double simultaneous omission can
never happen, we characterize which ones are obstructions for the
Coordinated Attack Problem.
We give two proofs of this result. One is combinatorial and uses the
classical bivalency technique for the necessary condition.
The second is topological and uses simplicial complexes to prove the necessary condition. We also present two different Consensus algorithms that are combinatorial (resp. topological) in essence.
Finally, we  analyze the two
proofs and illustrate the relationship between the combinatorial approach and
the topological approach in the very general case of message
adversaries. 
We show that the topological characterization gives a clearer explanation of why some message adversaries are obstructions or not.
This result is a convincing illustration of the power
of topological tools for distributed computability.
\end{abstract}

{\bf Keywords:} Distributed Systems, Fault-Tolerance, Message-Passing,
Synchronous Systems,
Coordinated Attack Problem, Two Generals Problem, Two Armies Problem,
Consensus, Message adversaries, Omission Faults, Distributed Computability, Simplicial Complexes, Topological methods

\section{Introduction}

\subsection{Motivations}

The  Coordinated  Attack Problem (also known in the literature as
the two generals or the two armies problem) is a long time problem in
the area of distributed computing.
It is a fictitious situation where two armies have to agree to attack
or not on a common enemy that is between them and might capture any of
their messengers.
Informally, it represents the  difficulties to agree in the
presence of communication faults. The design of a solution is
difficult, sometimes impossible, as it has
to address possibly an infinity of lost mutual acknowledgments.
It has important applications for the Distributed Databases commit for
two processes, see \cite{Gray78}. It was one of the first
impossibility results in the area of fault tolerance and distributed
computing \cite{akkoyunlu_constraints_1975,Gray78}.

In the vocabulary of more recent years, this problem can now be stated
as the Uniform Consensus Problem for 2 synchronous processes communicating by
message passing in the presence of omission
faults. It is then a simple instance of a problem that had been very widely
studied \cite{AT99,CBGS00,MR98}.
 See for example
\cite{RaynalSynchCons} for a
  recent survey about Consensus on synchronous systems with some
  emphasis on the omissions fault model.

Moreover, given that, if any message can be lost, the impossibility of
reaching an agreement is obvious, one can wonder why it has such
importance to get a name on its own, and maybe why it has been studied in
the first place...
The idea is that one has usually to restrict the way the
messages are lost in order to keep this problem relevant.

We call an arbitrary pattern of
failure by loss of messages a \emph{message adversary}. It will
formally describe the fault environment in which the system evolves.
For example, the message adversary where any message can be lost at any
round except that all messages cannot be lost indefinitely is a
special message adversary (any possibility of failure \emph{except one}
scenario), for which it is still impossible to solve the
Coordinated Attack Problem, but the proof might be less trivial.
Given a message adversary, a natural question is whether the
Coordinated Attack Problem is solvable against this environment. More
generally, the question that arises now is to describe
what are exactly the message adversaries for which the
Coordinated Attack Problem admit a solution, and for which ones is there no
solution. These later message adversaries will be called \emph{obstructions}.

\subsection{Related Works}

The Coordinated Attack Problem is a kind of folklore problem for
distributed systems.
It seems to appear first in \cite{akkoyunlu_constraints_1975} where it is a
problem of gangsters plotting for a big job. It is usually attributed to
\cite{Gray78}, where Jim Gray coined the name ``Two Generals Paradox''
and put the emphasis on the infinite recursive need for
acknowledgments in the impossibility proof.

In textbooks it is often given as an example, however the drastic
conditions under which this impossibility result yields are never
really discussed, even though for relevancy purpose they are often
slightly modified.
In \cite{LynchDA}, a different problem of Consensus (with a
weaker validity condition) is used.
In \cite{DADA}, such a possibility of eternal loss is
explicitly ruled out as it would give a trivial impossibility proof
otherwise.

This shows that the way the messages {may} be lost is an important
part of the problem definition,
hence it is interesting to characterize when the pattern of
loss allows to solve the consensus problem or not, \ie
whether the fault environment is an obstruction for the Coordinated
Attack Problem or not.
To our knowledge, this is the first time, this problem is investigated
for arbitrary  patterns of omission failures, even in the simple case
of only two processes.
Most notably, it has been addressed for an arbitrary number
of processes and for special (quite regular) patterns in
\cite{CHLT00,GKP03,RaynalSynchCons}.

A message adversary is oblivious if the set of possible communication
pattern is the same at each step. The complete characterization of
oblivious message adversaries for which Consensus is solvable has been
given in \cite{CGPcarac}. We consider here also non oblivious
adversaries.

Note that while the model has been around for at least tens of years
\cite{timeisnotahealer}, the name ``message adversary'' has been
coined only recently by Afek and Gafni in \cite{messadv}.

\subsection{Scope of Application and Contributions}

Impossibility results in distributed computing are the more
interesting when they are tight, \ie when they give an exact
characterization of when a problem is solvable and when it is not.
There are a lot of results regarding distributed tasks
when the underlying network is a complete graph and the patterns are simply described by faults (namely in the context
of Shared Memory systems), see for example the works in
\cite{HS99,SZ,BoGa93} where exact topologically-based
characterizations is given for the wait-free model.

There are also more recent results  when the underlying
network can be any arbitrary graph. The results given in \cite{SW07} by
Santoro and Widmayer are almost tight. What is worth to note is that
the general theorems of \cite{HS99} could not be directly used by this study
for the very reason that the failure model for communication networks
is not interestingly expressible in the fault model for systems with
one to one communication.
In the following of~\cite{CS09}, we are not interested in the exact
cause of a message not being sent/received. We are interested in as general as possible models. See Section~\ref{faultmetric} for a more detailed discussion.

We underline that the omission failures we
are studying here encompass networks with \emph{crash failures}, see
Example~\ref{ex:crash}.
It should also be clear that message adversaries can also be studied
in the context of a problem that is not the Consensus Problem.
Moreover, as we do not endorse any pattern of failures as being, say,
more ``realistic'', our technique can be applied for any
new patterns of failures.
\medskip

On the way to a thorough
characterization of all obstructions, we
address the Consensus problem for a particular but important
subclass of failure patterns,
namely the ones when no two messages can be lost at the same round.
It is long known that the Coordinated Attack Problem is unsolvable if at most one
message might be lost at each round \cite{CHLT00,GKP03}. But what happens with strictly weaker patterns was unknown.

Our contribution is the following. %
In the large subclass of message adversaries where the %
double simultaneous omission can never happen, we characterize which
ones are obstructions for the Coordinated Attack Problem.  We give two
alternative proofs of this result. One is based on traditional
combinatorial techniques, that have been extended to the more involved
setting of arbitrary message adversaries. The second one presents an
extension of topological techniques suited to arbitrary message adversaries.
The combinatorial proof was
presented in \cite{FG11}. The topological proof is an improved extract
from the Master thesis of one of the authors \cite{eloiM2} where the notion of terminating subdivision from \cite{GKM14} is applied to the setting of the Coordinated Attack Problem.

More interestingly, the topological characterization gives a nice
topological unified explanation of the characterization, which is
separated in four different cases in the combinatorial presentation
from \cite{FG11}.  This result is a convincing illustration of the
power of topological tools for distributed computability. Topological
tools for distributed computing have sometimes been criticized for being
over-mathematically involved. The result presented in this
paper shows that distributed computability is inherently linked to
topological properties.

But the paper also illustrate some pitfalls of such tools as we could
use the given characterization to uncover an error in the main theorem
of a paper generalizing the Asynchronous Computability Theorem to
arbitrary message adversaries \cite{GKM14}. Mathematically, this error
can be traced to the subtle differences between simplicial complexes
and abstract simplicial complexes of infinite size.
See Remark~\ref{subtle}

\bigskip
The outline of the paper is the following.
We describe Models and
define our Problem in the Section~\ref{def}. We present numerous
examples of application of our terminology and notation in
Section~\ref{examples}. We then address the characterization of
message adversaries without simultaneous faults that are obstructions for
the Coordinated Attack Problem in Theorem~\ref{thm:FG11}.
We give the proofs for necessary condition
(impossibility result) in \ref{CN} and for sufficient condition
(explicit algorithm) in \ref{CS} using the classical bivalency techniques.
We then prove the same results in a topological way in section \ref{sec:topo}.
Finally, the two results are compared and we show how the topological explanation gives more intuition about the result.

\section{Models and Definitions}

\label{def}

 \subsection{The Coordinated Attack Problem}

\subsubsection{A folklore problem}
Two generals have gathered forces on top of two facing hills. In between, in the
valley, their common enemy is entrenched. Every day each general
sends a messenger to the other through the valley. However this
is risky as the enemy may capture them. Now they need to get
the last piece of information: are they \emph{both} ready to attack?

This two army problem was originated
by~\cite{akkoyunlu_constraints_1975} and then by Gray~\cite{Gray78}
when modeling the distributed database commit.
It corresponds to the binary consensus with
two processes in the omission model. If their is no restriction on the
fault environments, then any messenger may be captured.
And if any messenger may be captured, then consensus is obviously
impossible: the enemy can succeed in capturing all of them, and
without communication, no distributed algorithm.

\subsubsection{Possible Environments}

Before trying to address what can be, in some sense, the most relevant
environments, we will describe
different environments in which the enemy cannot be
so powerful as to be able to capture any messenger.

This is a list of possible environments, using the same military
analogy. The generals name are \emph{White} and \emph{Black}.
\label{7cases}

\begin{enumerate}[(1)]
\item
no messenger is captured
\item
messengers from \emph{General White} may be captured
\item
messengers from \emph{General Black} may be captured
\item
messengers from one general are at risk, and if one of them is captured,
all the following will also be captured (the enemy got the secret ``Code
of Operations'' for this general from the first captured messenger)
\item
messengers from one general are at risk (the enemy could manage to infiltrate
a spy in one of the armies) %
\item
at most one messenger may be captured each day (the enemy can't closely
watch both armies on the same day)
\item
any messenger may be captured
\end{enumerate}

Which ones are (trivial) obstructions, and which are not? Nor
obstruction, nor trivial? What about more complicated environments?

\subsection{The Binary Consensus Problem}
\label{defconsensus}
A set of synchronous processes wish to agree about a binary
value. This problem was first identified and formalized by Lamport,
Shostak and Pease \cite{LSP}. Given a set of processes, a consensus
protocol must satisfy the following
properties for any combination of initial values~\cite{LynchDA}:

\begin{itemize}
\item \emph{Termination}: every process decides some value.
\item \emph{Validity}: if all processes initially propose the same value $v$,
then every process decides $v$.
\item \emph{Agreement}: if a process decides $v$, then every process decides
 $v$.
\end{itemize}

Consensus with such a termination and decision requirement for every
process is more precisely referred to as the \emph{Uniform Consensus},
see \cite{RaynalSynchCons} for example for a discussion.
Given a fault environment, the natural questions are : is the
Consensus solvable, if it is solvable, what is the minimal complexity?

\subsection{The Communication Model}
In this paper,
the system we consider is a set $\Pi$ of only $2$ processes named \emph{white}
and \emph{black},  $\Pi = \{\blanc,\noir\}$.
The processes evolve in \emph{synchronized} rounds.
In each round $r,$ every process \proc executes the following
steps: \proc sends a message to the other process, receives a
messages $M$ from the other process, and then updates its state
according to the received message.

The messages sent are, or are not, delivered according to the
environment in which the distributed computation takes place. This
environment is described by a \emph{message adversary}. Such an adversary
 models exactly when some
messages sent by a process to the other process may be lost
at some rounds.
We will consider arbitrary message adversaries, they will be represented
as  arbitrary sets of infinite sequences of combinations of communication rounds. 

In the following of~\cite{CS09}, we are not interested in the exact
cause of a message not being sent/received. We only refer to the phenomenon:
the content of messages being transmitted or not.
The fact that the adversaries are arbitrary means that we do not endorse
any metric to count the number of ``failures'' in the
system. There are metrics that count the number of lost messages, that
count the number of process that can lose messages (both in send and
receive actions). Other metrics count the same parameters but only
during a round of the system. The inconvenient of this metric-centric
approach  is that, even when restricted only to omission faults,  it
can happen that some results obtained
on complete networks are not usable on arbitrary networks. Because in,
say, a ring
network, you are stating in some sense that every node is
faulty. See eg \cite{SW07} for a very similar discussion,  where
Santoro and Widmayer, trying to solve some generalization of
agreement problems in general networks could not use directly the
known results in complete networks.
\label{faultmetric}

As said in the introduction, the Coordinated Attack Problem is
nowadays stated as the Uniform Consensus Problem for 2 synchronous
processes communicating by message passing in the presence of omission
faults. Nonetheless, we will use Consensus and Uniform Consensus
interchangeably in this paper. We emphasize that, because we do not
assign omission faults to any process (see previous discussion), there
are only correct processes.

\subsection{Message Adversaries}
\label{subsec:def}
We introduce and present here our notation.

\begin{definition}
  We denote by $\mathcal G_2$  the set of directed graphs with vertices
  in $\Pi$.
  $$\mathcal G_2=\{\lok,\lblanc,\lnoir,\lall\}$$
    We denote by $\Gamma$  the following subset of $\mathcal G_2$
$$\Gamma=\{\lok,\lblanc,\lnoir\}.$$
\end{definition}

At a given round, there are only four combinations of communication.
Those elements describe what can happen at a given round with the
following straightforward semantics:

\begin{itemize}
\item
$\lok,$ no process looses messages
\item
$\lblanc,$ the message of process \blanc, if any, is not transmitted
\item
$\lnoir,$ the message of process \noir, if any, is not transmitted
\item
$\lall,$ both messages, if any, are not transmitted\footnote{In order
  to increase the readability, we note \lall
  instead of $\blanc\;\;\noir$, the double-omission case.}.
\end{itemize}

The terminology \emph{message adversary} has been introduced in \cite{messadv}, but the concept is way older.
\begin{definition}
  A \emph{message adversary} over $\Pi$ is a set of infinite
  sequences of elements of $\mathcal G_2$.
\end{definition}

We will use the standard following notations in order to describe more easily
our message adversaries \cite{PPinfinite}. A (infinite) sequence is
seen as a (infinite) word over the alphabet $\mathcal G_2$.
The empty word is denoted by $\varepsilon$.

\begin{definition}
  Given $A\subset\mathcal G_2$, $A^*$ is the set of all finite
  sequences of elements of $A$, $A^\omega$ is the set
  of all infinite ones and $A^\infty = A^* \cup A^\omega.$
\end{definition}

An adversary of the form $A^\omega$ is called an \emph{oblivious adversary}.
A word in $L\subset\mathcal G_2^\omega$ is called a \emph{communication
  scenario} (or \emph{scenario} for short) of message adversary $L$.
Given a word $w\in\mathcal G_2^*$, it is called a \emph{partial scenario} and
$len(w)$ is the length of this word.

Intuitively, the letter at position $r$ of the word describes whether
there will be, or not, transmission of the message, if one is
being sent at round $r$.
A formal definition of an execution under a scenario will be given in
Section~\ref{execution}.

\medskip

We recall now the definition of the prefix of words and languages.
A word $u\in\mathcal G_2^*$ is a prefix for $w\in\mathcal G_2^*$
(resp. $w'\in\mathcal G_2^\omega$) if there exist $v\in\mathcal G_2^*$
(resp. $v'\in\mathcal G_2^\omega$) such that $w=uv$ (resp. $w'=uv'$).

Given
$w\in\mathcal G_2^\omega$ and $r\in\N$, $w_{|r}$ is the prefix of size $r$ of $w$.

\begin{definition}
  Let $w\in\mathcal G_2^*$, then $Pref(w)=\{u\in\mathcal G_2^*|u \mbox{ is a
    prefix of } w\}$. Let $L\subset\mathcal G_2^*$, let $r\in\N$,
  $Pref_r(L)=\{w_{|r}\mid w\in L\}$ and
  $Pref(L)=\mathop\bigcup\limits_{w\in L}Pref(w)=\mathop\bigcup\limits_{r\in \N}Pref_r(L)$.
\end{definition}

\subsection{Examples}
\label{examples}
  We do not restrict our study to regular languages, however all message
  adversaries we are aware of are regular, as can be seen in the
  following examples, where the rational expressions prove to be very
  convenient.

\medskip
We show how standard fault environments are conveniently described in
our framework. %

\begin{example}
  Consider a system where, at each round, up to $2$ messages can be
  lost. The associated message adversary is $\mathcal G_2^\omega$.
\end{example}

\begin{example}
  Consider a system where, at each round, only one message can be
  lost. The associated message adversary is
  $\{\lok,\lblanc,\lnoir\}^\omega=\Gamma^\omega$.
\end{example}

\begin{example}
  Consider a system where at most one of the processes can lose
  messages. The
  associated adversary is the following:
  $$S_1 = \Sun$$

\end{example}

\begin{example}\label{ex:crash}
  Consider a system where at most one of the processes can crash,
  For the phenomena point of view, this is equivalent to the fact that
  at some point, no message from a particular process will be
  transmitted. The associated adversary is the following:

  $$C_1=\{\lok^\omega\}\cup\{\lok\}^*(\{\lblanc^\omega, \lnoir^\omega\})$$

\end{example}

\begin{example} \label{ex7cases}
  Finally the seven simple cases exposed by the possible environments described in Section \ref{7cases} are described
  formally as follows:
\begin{eqnarray}
\label{S_0}
S_0&=&\Szero \\
\label{T_blanc}
T_\blanc&=&\Tblanc \\
\label{T_noir}
T_\noir&=&\Tnoir \\
\label{C_1}
C_1&=&\{\lok^\omega\}\cup\{\lok\}^*(\{\lblanc^\omega, \lnoir^\omega\})\\
\label{S_1}
S_1&=&\Sun = T_\blanc \cup T_\noir \\
\label{R_1}
R_1&=&\Go \\%
\label{S_2}
S_2&=&\mathcal G_2^\omega
\end{eqnarray}

Note that the fourth and first cases correspond respectively to the
synchronous crash-prone model~\cite{LynchDA} and to the
$1-$resilient model~\cite{Adagio03,GKP03}. Even though in our definition, sets of possible scenarios could be arbitrary,
it seems that all standard models (including crash-based models) can be described using only regular expressions. 
\end{example}

\subsection{Execution of a Distributed Algorithm}
\label{execution}
Given a message adversary $L$, we define  what is a run of
a given algorithm \algo subject to $L$.

An execution, or run, of an algorithm  \algo under scenario $w\in L$
is the following. At round $r\in\N$, messages are sent (or not) by the
processes. The fact that the corresponding receive action will be
successful depends on $a$, the $r$-th letter of $w$.

\begin{itemize}
\item if $a = \lok,$ then all messages, if any, are correctly delivered,
\item if $a = \lblanc,$ then the message of process \blanc is not
  transmitted (the receive call of \noir, if any at this round,
  returns $null$),
\item if $a = \lblanc,$ then the message of process \noir is not
  transmitted (the receive call of \blanc, if any at this round,
  returns $null$),
\item if $a = \lall,$ no messages is transmitted.
\end{itemize}

Then, both processes updates their state according to \algo and the
value received.
An execution is a (possibly infinite) sequence of such
messages exchanges and corresponding local states.

\medskip
Given $u\in Pref(w)$, we denote by $s^\proc(u)$ the
state of process \proc at the $len(u)$-th round of the algorithm \algo
under scenario $w$. This means in particular that
$s^\proc(\varepsilon)$ represents the initial state of \proc, where
$\varepsilon$ denotes the empty word.

Finally and classically,
\begin{definition}
An algorithm \algo solves the Coordinated Attacked Problem for the
message adversary $L$ if  for any
scenario $w\in L$, there exist $u\in Pref(w)$ such that the states
of the two processes ($s^\blanc(u)$ and $s^\noir(u)$) satisfy the
three conditions of Section~\ref{defconsensus}.
\end{definition}

\begin{definition}
  A message adversary $L$ is said to be \emph{solvable} if there exist an
  algorithm that solves the Coordinated Attacked Problem for $L$. It
  is said to be an \emph{obstruction} otherwise.

  A message adversary $L$ is a (inclusion) minimal obstruction if any
  $L'\varsubsetneq L$ is solvable.
\end{definition}

\subsection{Index of a Scenario}
\label{sub:ind}
We will use the following integer function of scenarios that will be
proved to be a useful encoding of all the important properties of
a given message adversary by mapping scenarios in $\Gamma^r$ to integers in
$[0, 3^r-1]$. The intuition to this function will become clearer from the
topological point of view in Section~\ref{sec:topo}.
By induction, we define the following integer index  given
$w\in\Gamma^*$. First, we define $\mu$ on $\Gamma$ by
\begin{itemize}
\item $\mu(\lnoir)=-1$,
\item $\mu(\lok)=0$.
\item $\mu(\lblanc)=1$,
\end{itemize}

\begin{definition}
  Let $w\in\Gamma^*$. We define $ind(\varepsilon)=0$.
  If $len(w)\geq1$, then we have $w=ua$ where $u\in\Gamma^*$ and
  $a\in\Gamma$. In this case, we define
  $$ind(w):=3ind(u)+(-1)^{ind(u)}\mu(a)+1.$$
\end{definition}

Let $n\in\N$, define $ind_n\colon\Gamma^n\to[0,1]$ and $\overline{ind}\colon\Go\to[0,1]$ to
be respectively the normalization of $ind$ and the limit index of the infinite
scenarios~:
\begin{itemize}
  \item $\forall w\in\Gamma^n \quad ind_n(w) = \frac{ind(w)}{3^{n}}$
  \item $\forall w\in\Go \quad \overline{ind}(w) = \lim\limits_{n\to+\infty} ind_n(w_{\mid n})$
\end{itemize}

The convergence for $\overline{ind}$ is obvious from the following lemma.

\begin{lemma}\label{bijindex}
  Let $r\in\N$. The application $ind$ is a bijection from $\Gamma^r$
  to $\llbracket 0,3^r-1\rrbracket.$
\end{lemma}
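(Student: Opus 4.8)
The plan is to proceed by induction on $r$, exploiting the recursive definition of $ind$ together with the fact that $|\Gamma^r| = 3^r = |\llbracket 0, 3^r-1\rrbracket|$, so that at each step it suffices to establish surjectivity (equivalently, to verify that the images tile the target interval).

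For the base case $r=0$, we have $\Gamma^0 = \{\varepsilon\}$ and $ind(\varepsilon) = 0$, which is exactly a bijection onto $\llbracket 0, 0\rrbracket = \{0\}$. For the inductive step, assume $ind$ restricts to a bijection from $\Gamma^r$ onto $\llbracket 0, 3^r-1\rrbracket$. Every word $w \in \Gamma^{r+1}$ factors uniquely as $w = ua$ with $u \in \Gamma^r$ and $a \in \Gamma$, and by the defining recurrence $ind(ua) = 3\,ind(u) + (-1)^{ind(u)}\mu(a) + 1$. Writing $k = ind(u)$, the induction hypothesis tells us that as $u$ ranges over $\Gamma^r$, the value $k$ ranges bijectively over $\llbracket 0, 3^r-1\rrbracket$. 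The plan is then to show that, for each fixed $u$, the three words $u\lnoir$, $u\lok$, $u\lblanc$ are sent bijectively onto the block $\{3k, 3k+1, 3k+2\}$.

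The one delicate point—and the step I expect to require the most care—is the sign factor $(-1)^{ind(u)}$, which reverses the order in which $\mu(a) \in \{-1,0,1\}$ is visited depending on the parity of $k$. I would dispatch this by a two-case analysis. When $k$ is even, $(-1)^k = 1$ and $3k + \mu(a) + 1$ runs through $\{3k, 3k+1, 3k+2\}$ as $\mu(a)$ runs through $\{-1,0,1\}$; when $k$ is odd, $(-1)^k = -1$ and $3k - \mu(a) + 1$ runs through the \emph{same} set $\{3k, 3k+1, 3k+2\}$, merely in the reverse order. Since $\mu$ is itself a bijection from $\Gamma$ onto $\{-1,0,1\}$, in both cases the three children of $u$ map bijectively onto that block of three consecutive integers.

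Finally, since $k$ ranges over all of $\llbracket 0, 3^r-1\rrbracket$ as $u$ varies, the blocks $\{3k, 3k+1, 3k+2\}$ for $k = 0, \dots, 3^r-1$ partition $\llbracket 0, 3^{r+1}-1\rrbracket$ into $3^r$ disjoint triples. Combining this partition with the per-block bijectivity established above yields that $ind$ is a bijection from $\Gamma^{r+1}$ onto $\llbracket 0, 3^{r+1}-1\rrbracket$, completing the induction. The core of the argument is thus the elementary observation that reversing a block of three consecutive integers leaves the block invariant, which neutralizes the alternating sign in the recurrence.
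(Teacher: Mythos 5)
Your proof is correct. Both you and the paper argue by induction on $r$ using the recurrence $ind(ua)=3\,ind(u)+(-1)^{ind(u)}\mu(a)+1$, but the mechanics of the inductive step differ. The paper first checks the range containment $0\leq ind(w)\leq 3^r-1$ and then proves \emph{injectivity} by a divisibility argument: from $ind(w)=ind(w')$ it derives $3(ind(u)-ind(u'))=(-1)^{ind(u')}\mu(a')-(-1)^{ind(u)}\mu(a)$, observes that the right-hand side has absolute value at most $2$ while the left-hand side is a multiple of $3$, and concludes that both sides vanish; bijectivity then follows from the equal cardinalities $|\Gamma^r|=3^r$. You instead describe the image explicitly: the three extensions of a prefix $u$ with $ind(u)=k$ land exactly on the block $\{3k,3k+1,3k+2\}$, traversed forwards or backwards according to the parity of $k$, and these blocks partition $\llbracket 0,3^{r+1}-1\rrbracket$ as $k$ ranges over $\llbracket 0,3^r-1\rrbracket$. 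Your version makes the boustrophedon (snake-like) structure of the index visible --- which is precisely the geometric intuition the paper only develops later, in the chromatic-subdivision picture of Section~\ref{sec:topo} --- at the cost of an explicit two-case parity analysis; the paper's injectivity trick is shorter but leaves the image structure implicit. Both are complete and correct.
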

\begin{proof}
  The lemma is proved by a simple induction. If $r=0$, then the
  property holds.

  Let $r>0$. Suppose the property is satisfied for $r-1$. Given
  $w\in\Gamma^r$, we have $w=ua$ with $u\in\Gamma^{r-1}$ and
  $a\in\Gamma$. From $ind(w)=3ind(u)+(-1)^{ind(u)}\mu(a)+1$ and
  the induction hypothesis, we get immediately that $0\leq ind(w)\leq
  3^r - 1$.

  Now, we need only to prove injectivity. Suppose there are
  $w,w'\in\Gamma^r$ such that $ind(w)=ind(w').$ So there are
  $u,u'\in \Gamma^{r-1}$ and $a,a'\in \Gamma$ such that
  $$3ind(u)+(-1)^{ind(u)}\mu(a)+1 =
    3ind(u')+(-1)^{ind(u')}\mu(a')+1.$$
  Then   $$3(ind(u) - ind(u')) = (-1)^{ind(u')}\mu(a') -
  (-1)^{ind(u)}\mu(a).$$
  Remarking that the right hand side of this integer equality has an
  absolute value that can be at most 2, we finally get
  \begin{eqnarray*}
    ind(u) &=& ind(u')\\
    (-1)^{ind(u)}\mu(a) &=& (-1)^{ind(u')}\mu(a')
  \end{eqnarray*}

  By induction hypothesis, we get that $u=u'$ and $a=a'$. Hence
  $w=w'$, and $ind$ is injective, therefore bijective from $\Gamma^r$
  onto $\llbracket 0,3^r-1\rrbracket.$
\end{proof}

Two easy calculations give
\begin{proposition}
  Let $r\in\N$, $ind(\lnoir^r)=0$ and $ind(\lblanc^r)=3^r-1.$
\end{proposition}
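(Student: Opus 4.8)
The plan is to prove both identities directly by induction on $r$, using the recursive definition $ind(ua)=3\,ind(u)+(-1)^{ind(u)}\mu(a)+1$ together with the values $\mu(\lnoir)=-1$ and $\mu(\lblanc)=1$. The key observation driving both computations is that the sign factor $(-1)^{ind(u)}$ is entirely determined by the parity of the index accumulated so far, so I would first track that parity along the two specific scenarios $\lnoir^r$ and $\lblanc^r$.

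For the claim $ind(\lnoir^r)=0$, I would show by induction that the index stays identically $0$ at every prefix. The base case $ind(\varepsilon)=0$ holds by definition. For the inductive step, assuming $ind(\lnoir^{r-1})=0$, the parity is even, so $(-1)^{ind(\lnoir^{r-1})}=1$, and thus $ind(\lnoir^r)=3\cdot 0+1\cdot\mu(\lnoir)+1=0+(-1)+1=0$. This closes the induction, and the value $0$ is a fixed point of the recurrence when appending $\lnoir$.

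For the claim $ind(\lblanc^r)=3^r-1$, I would proceed similarly but must carry along the parity of $3^r-1$, which is always odd for $r\ge 1$ (since $3^r$ is odd, $3^r-1$ is even — here I must be careful: $3^r-1$ is \emph{even}, so the sign factor is $+1$). With the base case $ind(\varepsilon)=0$, the inductive step assumes $ind(\lblanc^{r-1})=3^{r-1}-1$. Since $3^{r-1}-1$ is even, $(-1)^{ind(\lblanc^{r-1})}=1$, so $ind(\lblanc^r)=3(3^{r-1}-1)+1\cdot\mu(\lblanc)+1=3^r-3+1+1=3^r-1$. This verifies the recurrence and closes the induction.

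The only subtle point — the one place a careless calculation could go wrong — is the parity bookkeeping for the sign factor $(-1)^{ind(u)}$, since the recurrence behaves differently depending on whether the running index is even or odd. Fortunately, for both extremal scenarios the running index $ind(u)$ turns out to be even at every stage ($0$ in the first case, $3^{r-1}-1$ in the second), so the sign is always $+1$ and the computation simplifies considerably. I would make the parity explicit in each inductive step to keep the proof transparent, and both results also follow as immediate consequences of the fact, from Lemma~\ref{bijindex}, that $ind$ maps $\Gamma^r$ bijectively onto $\llbracket 0,3^r-1\rrbracket$, with $\lnoir^r$ and $\lblanc^r$ being the natural candidates for the two endpoints of the interval.
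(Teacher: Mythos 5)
Your induction is correct and is precisely the ``two easy calculations'' the paper leaves implicit: along both scenarios the running index ($0$, resp.\ $3^{k}-1$) is even at every prefix, so the sign factor $(-1)^{ind(u)}$ is always $+1$ and the recurrence gives $3\cdot 0-1+1=0$ and $3(3^{r-1}-1)+1+1=3^{r}-1$. One minor caveat: your closing remark that the identities ``also follow'' from Lemma~\ref{bijindex} is not itself a proof, since bijectivity of $ind$ onto $\llbracket 0,3^{r}-1\rrbracket$ does not identify \emph{which} words land on the endpoints; the induction is the actual argument.
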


In Figure~\ref{fig:index}, the indexes for words of length at most 2
are given.

\begin{figure}
  \centering
  \begin{tabular}[c]{|r|c|c|c|}
    \hline
    word of length $1$&\lnoir&\lok&\lblanc\\
    \hline
    index&0&1&2\\
    \hline
  \end{tabular}

  \medskip

  \begin{tabular}[c]{|r|c|c|c|}
    \hline
    word of length $2$&\lnoir\lnoir&\lnoir\lok&\lnoir\lblanc\\
    \hline
    index&0&1&2\\
    \hline
  \end{tabular}

  \medskip

  \begin{tabular}[c]{|r|c|c|c|}
    \hline
    word of length $2$&\lok\lnoir&\lok\lok&\lok\lblanc\\
    \hline
    index&5&4&3\\
    \hline
  \end{tabular}

  \medskip

  \begin{tabular}[c]{|r|c|c|c|}
    \hline
    word of length $2$&\lblanc\lnoir&\lblanc\lok&\lblanc\lblanc\\
    \hline
    index&6&7&8\\
    \hline
  \end{tabular}
  \caption{Indexes for some short words}
  \label{fig:index}
\end{figure}

We now describe precisely what are the words whose indexes
differs by only 1. We have two cases, either they have the same prefix and different last letter 
or different prefix and same last letter.
\begin{lemma}\label{diff1}
  Let $r\in\N$, and $v,v'\in\Gamma^r$. Then $ind(v')=ind(v)+1$ if
  and only if one of the following conditions holds:
  \begin{theoenum}
  \item $ind(v)$ is even and
    \begin{itemize}
    \item either there exist $u\in\Gamma^{r-1}$, and $v=u\lok$,
      $v'=u\lnoir$,
    \item either there exist $u,u'\in\Gamma^{r-1}$, and $v=u\lblanc$,
      $v'=u'\lblanc$, and $ind(u')=ind(u)+1.$
    \end{itemize}
  \item $ind(v)$ is odd and
    \begin{itemize}
    \item either there exist $u\in\Gamma^{r-1}$, and $v=u\lblanc$,
      $v'=u\lok$,
    \item either there exist $u,u'\in\Gamma^{r-1}$, and $v=u\lnoir$,
      $v'=u'\lnoir$, and $ind(u')=ind(u)+1.$
    \end{itemize}
  \end{theoenum}
\end{lemma}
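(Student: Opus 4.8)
The plan is to prove both implications directly from the defining recurrence $ind(ua) = 3\,ind(u) + (-1)^{ind(u)}\mu(a) + 1$, relying on the bijectivity of Lemma~\ref{bijindex}. The sufficiency direction (each listed configuration forces $ind(v') = ind(v)+1$) is a routine substitution: for instance, in the first item of case~(i), with $v = u\lok$ and $v' = u\lnoir$, the hypothesis that $ind(v) = 3\,ind(u)+1$ is even forces $ind(u)$ to be odd, so $(-1)^{ind(u)} = -1$ and $ind(v') = ind(u\lnoir) = 3\,ind(u)+2 = ind(v)+1$. The other items are identical in spirit, the cross-prefix items exploiting $ind(u') = ind(u)+1$ to turn an increment of the prefix index into an increment of $3$ in the leading term, which the sign factor then corrects to a net increment of $1$.

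For the converse I would first isolate the structural fact that makes the whole statement transparent. Fixing a prefix $u$ and writing $k = ind(u)$, the three one-letter extensions of $u$ occupy exactly the block of consecutive indices $\{3k, 3k+1, 3k+2\}$, with the middle value $3k+1$ always attached to $u\lok$ and the two extremes distributed according to the parity of $k$: one gets $ind(u\lnoir) = 3k$ and $ind(u\lblanc) = 3k+2$ when $k$ is even, and the reverse when $k$ is odd. Because distinct prefixes produce disjoint blocks and $ind$ is a bijection onto $\llbracket 0, 3^{r}-1\rrbracket$ by Lemma~\ref{bijindex}, these blocks tile the whole range of indices.

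The necessity direction is then a finite case analysis driven by $ind(v) \bmod 3$. If $ind(v) \equiv 0$ or $1 \pmod 3$, then $ind(v)$ and $ind(v)+1$ lie in a common block, so $v$ and $v'$ share the prefix $u$ with $ind(u) = \lfloor ind(v)/3\rfloor$ and differ only in their last letter; if $ind(v) \equiv 2 \pmod 3$, then $ind(v) = 3k+2$ and $ind(v') = 3(k+1)$ straddle two consecutive blocks, forcing $v = ua$ and $v' = u'a$ with the same extremal last letter $a$ and prefixes satisfying $ind(u') = ind(u)+1$. Reading the actual letters off the parity-dependent ordering of the preceding paragraph then determines, in each residue class, the two-letter configuration relating $v$ and $v'$, which one compares against the items of (i) and (ii). The main obstacle is purely the bookkeeping between two distinct parities: the statement is phrased through the parity of $ind(v)$, whereas the internal ordering of a block is governed by the parity of the prefix index $ind(u)$, and the two disagree exactly when the last letter is $\lok$, since $ind(u\lok) = 3\,ind(u)+1$ flips parity. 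Carrying this translation carefully through the three residue classes is precisely what sorts the configurations according to the parity of $ind(v)$.
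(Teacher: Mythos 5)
Your block/tiling decomposition is correct and is a genuinely cleaner route than the paper's direct manipulation of the integer equality, but your proof has a real gap at the very last step: you assert that reading the letters off the parity-dependent block ordering ``compares against the items of (i) and (ii)'', and that comparison in fact fails. Carried out, your analysis yields \emph{six} configurations, of which only four appear in the statement. For $ind(v)\equiv 0\pmod 3$ with $k=ind(u)$ even you get $v=u\lnoir$, $v'=u\lok$ and $ind(v)=3k$ even, which is neither bullet of item (i); symmetrically, for $ind(v)\equiv 1\pmod 3$ with $k$ even you get $v=u\lok$, $v'=u\lblanc$ and $ind(v)=3k+1$ odd, which is neither bullet of item (ii). The smallest counterexample is $r=1$, $v=\lnoir$, $v'=\lok$: here $ind(v)=0$ is even and $ind(v')=1$, yet $v$ ends neither in $\lok$ nor in $\lblanc$ (this is visible directly in Figure~\ref{fig:index}). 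So the necessity direction of the lemma as stated is false, and no amount of bookkeeping will collapse your (correct) six cases onto the four listed ones.

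For what it is worth, the paper's own proof has the mirror-image defect: its same-prefix cases 1--4 carry a sign error (e.g.\ its case 1, with $ind(u)=ind(u')$ even, $a=\lblanc$, $a'=\lok$, actually gives $ind(v')=ind(v)-1$), and the concluding ``getting all the pieces together'' silently discards two of the six cases. The two configurations missing from the statement still satisfy the conclusion of Corollary~\ref{state} (for $v=u\lnoir$ versus $v'=u\lok$, process $\noir$ receives the same message in both runs, so $s^\noir(v)=s^\noir(v')$ and $ind(v)$ is even), so the downstream results survive. But your proof, as written, claims to establish the stated equivalence and cannot: you should either add the two missing bullets to the statement, or actually execute (rather than assert) the final matching step, at which point the discrepancy becomes unavoidable.
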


\begin{proof}
  The lemma is proved by a induction. If $r=0$, then the
  property holds. Let $r\in\N^*$. Suppose the property is satisfied
  for $r-1$.

  Suppose there are
  $w,w'\in\Gamma^r$ such that $ind(w')=ind(w)+1.$ So there are
  $u,u'\in \Gamma^{r-1}$ and $a,a'\in \Gamma$ such that
  $$3ind(u)+(-1)^{ind(u)}\mu(a)+2 =
    3ind(u')+(-1)^{ind(u')}\mu(a')+1.$$

    Then   $$3(ind(u) - ind(u')) = (-1)^{ind(u')}\mu(a') -
  (-1)^{ind(u)}\mu(a) - 1.$$

  Remarking again that this is an integer equality, we then have
  \begin{itemize}
  \item either $(ind(u)=ind(u')$ and $(-1)^{ind(u')}\mu(a') =
    (-1)^{ind(u)}\mu(a) + 1$,
  \item either $(ind(u')=ind(u)+1$ and $(-1)^{ind(u')}\mu(a') =
  (-1)^{ind(u)}\mu(a)-2.$
  \end{itemize}

  This yields the following cases:
  \begin{enumerate}
  \item $ind(u)=ind(u')$ is even, $a=\lblanc$ and $a'=\lok$,
  \item $ind(u)=ind(u')$ is odd, $a=\lnoir$ and $a'=\lok$,
  \item $ind(u)=ind(u')$ is even, $a=\lok$ and $a'=\lnoir$,
  \item $ind(u)=ind(u')$ is odd, $a=\lok$ and $a'=\lblanc$,
  \item $ind(u')=ind(u)+1$, $ind(u)$ is even, $a=\lblanc=a'$,
  \item $ind(u')=ind(u)+1$, $ind(u)$ is odd, $a=\lnoir=a'$,
  \end{enumerate}

  Getting all the pieces together, we get the results of the lemma.
\end{proof}

Given an algorithm \algo, we have this fundamental corollary, that
explicit the uncertainty process can experience between two executions whose
indexes differ by only 1: one of the process is in the same state
in both cases. Which process it is depends on the parity.  In other
word, when the first index is even, \noir cannot distinguish the two
executions, when it is odd, this is \blanc that cannot distinguish the
two executions.

\begin{corollary}\label{state}
  Let $v,v'\in\Gamma^r$ such that $ind(v')=ind(v)+1$. Then,
  \begin{theoenum}
  \item if $ind(v)$ is even then $s^\noir(v)=s^\noir(v')$,
  \item if $ind(v)$ is odd then $s^\blanc(v)=s^\blanc(v')$.
  \end{theoenum}
\end{corollary}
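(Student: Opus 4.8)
The plan is to prove the corollary by induction on $r$, relying on the exhaustive case analysis of Lemma~\ref{diff1} together with the message-delivery semantics of Section~\ref{execution}. The two facts I would extract from those semantics are: first, the state $s^\proc(wa)$ reached by a process $\proc$ after one more round is a deterministic function of its previous state $s^\proc(w)$ and of the single message it receives (or $null$) during that round; and second, and crucially, within $\Gamma$ the reception pattern is asymmetric---process $\noir$ fails to receive (gets $null$) exactly when the last letter is $\lblanc$, while process $\blanc$ fails to receive exactly when the last letter is $\lnoir$. Moreover the message a process \emph{sends} at a round is fixed by its state at the start of that round. The parity of $ind(v)$ in the statement is precisely what selects, through Lemma~\ref{diff1}, which of the two processes is shielded from the difference between $v$ and $v'$.

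First I would treat the two \emph{same-prefix} subcases, where $v=u\lok,\ v'=u\lnoir$ (with $ind(v)$ even) or $v=u\lblanc,\ v'=u\lok$ (with $ind(v)$ odd). Here both scenarios share the prefix $u$, so after $len(u)$ rounds both processes sit in identical states, and in particular the message emitted at the last round is the same in $v$ and in $v'$. In the even case the two candidate last letters are $\lok$ and $\lnoir$, and under both of them process $\noir$ \emph{does} receive $\blanc$'s (identical) message; since $\noir$'s prior state $s^\noir(u)$ is also identical, its update yields $s^\noir(v)=s^\noir(v')$. In the odd case the last letters are $\lblanc$ and $\lok$, under both of which process $\blanc$ receives $\noir$'s identical message, giving $s^\blanc(v)=s^\blanc(v')$ by the same argument.

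Next I would treat the two \emph{different-prefix} subcases, where $v=u\lblanc,\ v'=u'\lblanc$ with $ind(u')=ind(u)+1$ and $ind(u)$ even (the case $ind(v)$ even), or $v=u\lnoir,\ v'=u'\lnoir$ with $ind(u')=ind(u)+1$ and $ind(u)$ odd (the case $ind(v)$ odd). In the first, the shared last letter $\lblanc$ makes $\noir$'s receive return $null$, so $s^\noir(v)$ depends only on $s^\noir(u)$ and on $null$; the induction hypothesis applied to $u,u'$ (whose indices differ by one, with $ind(u)$ even) gives $s^\noir(u)=s^\noir(u')$, whence $s^\noir(v)=s^\noir(v')$. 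The second is symmetric: the shared last letter $\lnoir$ makes $\blanc$'s receive return $null$, and the induction hypothesis on $u,u'$ (indices differing by one, $ind(u)$ odd) yields $s^\blanc(u)=s^\blanc(u')$, hence $s^\blanc(v)=s^\blanc(v')$.

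The base case $r=0$ is vacuous, since no two distinct words exist in $\Gamma^0$. The only point that needs care---and the reason the statement threads the parity through---is the bookkeeping in the different-prefix subcases: I must check that the parity of $ind(u)$ supplied by Lemma~\ref{diff1} is exactly the one under which the induction hypothesis delivers the state equality for the very process that is then shielded by the $null$ reception at the last round. Lemma~\ref{diff1} is arranged so that this matching is automatic (even prefix and tracked process $\noir$ in the first subcase, odd prefix and tracked process $\blanc$ in the second), so no work beyond invoking the hypothesis is required.
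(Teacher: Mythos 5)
Your proof is correct and follows essentially the same route as the paper's (very terse) argument: a case analysis via Lemma~\ref{diff1}, distinguishing whether the tracked process receives the same message from an identically-prefixed sender or receives $null$ while the prefixes differ. The only difference is that you make explicit the induction on $r$ needed in the different-prefix subcases, which the paper's one-sentence proof leaves implicit.
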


\begin{proof}
  We prove the result using
  Lemma~\ref{diff1} and remarking that either a process receives a
  message from the other process being in the same state in the
  preceding configuration $u$; either it receives no message when the
  state of the other process actually differ.
\end{proof}

\subsection{Characterization}
\label{sub:comb_charac}
We prove that a message adversary $L\subset\Gamma^\omega$ is solvable
if and only if it does not contain a fair scenario or a special pair
of unfair scenarios. We define the following set to help describe the
special unfair pairs.

\begin{definition}
  A scenario $w\in\mathcal G_2^\omega$ is \emph{unfair} if
  $w\in\mathcal G_2^*(\{\lall,\lblanc\}^\omega\cup\{\lall,\lnoir\}^\omega)$.
  The set of fair scenarios of $\Gamma^\omega$ is denoted by
  $Fair(\Gamma^\omega)$.
\end{definition}
In words, in an \emph{unfair} scenario, there is one or more processes for
which the messages are indefinitely lost at some point. And in a \emph{fair}
scenario there is an infinity of messages that are received from both
processes.

\begin{definition}\label{pair} We define \emph{special pairs}
  as $SPair(\Gamma^\omega)=\{(w,w')\in\Gamma^\omega\times\Gamma^\omega
    \mid w\neq w',
    \forall r\in\N |ind(w_{\mid r})-ind(w'_{\mid r})|\leq1\}.$

\end{definition}

\begin{theorem}\label{thm:FG11}
  Let $L\subset\Gamma^\omega$, then Consensus is solvable for message
  adversary $L$ if and only if 
  $L\in\{\F_1\cup\F_2\cup\F_3\cup\F_4\}$ where
  \begin{theoenum}
  \item $\F_1 = \{L\subset\Go \mid \exists f\in Fair(\Go)\wedge f\notin L\}$
  \item $\F_2 = \{L\subset\Go \mid \exists (w,w')\in SPair(\Go)\wedge
                                                              w,w'\notin L\}$
  \item $\F_3 = \{L\subset\Go \mid \lnoir^\omega\notin L\}$
  \item $\F_4 = \{L\subset\Go \mid \lblanc^\omega\notin L\}$
  \end{theoenum}
\end{theorem}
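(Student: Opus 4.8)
The plan is to prove the two implications separately, using the index machinery of Section~\ref{sub:ind} — in particular Corollary~\ref{state} — as the combinatorial engine, and to treat the whole problem as a connectivity question over the set of (initial configuration, scenario) pairs. Recall that by Termination and Agreement each scenario $w \in L$ produces, for a fixed pair of initial values, a single common decision value, and that Validity pins this value to $0$ when both processes propose $0$ and to $1$ when both propose $1$. The entire argument then consists in linking a forced-$0$ situation to a forced-$1$ situation through a chain of executions in which, at each step, some process cannot tell that anything has changed.

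For the \emph{necessity} direction I would argue the contrapositive: assuming $\lnoir^\omega \in L$ and $\lblanc^\omega \in L$ (negations of $\F_3$, $\F_4$), that $L$ contains every fair scenario (negation of $\F_1$), and that $L$ contains at least one scenario of every special pair (negation of $\F_2$), I would show no algorithm solves $L$. Start from the configuration where both processes propose $0$, run under $\lnoir^\omega$: its decision is $0$ by Validity. In $\lnoir^\omega$ process $\blanc$ never receives from $\noir$, so $s^\blanc$ is unchanged if we flip $\noir$'s input to $1$; by Agreement this new execution still decides $0$. Keeping the split input $(\blanc=0,\noir=1)$ fixed, I would then walk from $\lnoir^\omega$ to $\lblanc^\omega$ along scenarios whose successive indices differ by one: by Corollary~\ref{state} each such step leaves one process in the same state and hence preserves the common decision, so the execution under $\lblanc^\omega$ also decides $0$. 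Finally, in $\lblanc^\omega$ it is $\noir$ that is deaf, so flipping $\blanc$'s input to $1$ preserves $s^\noir$ and the decision, yielding a $(1,1)$-execution that decides $0$, contradicting Validity. The endpoints of this walk must lie in $L$, which is exactly why $\lnoir^\omega,\lblanc^\omega \in L$ is needed; and the interior walk must stay inside $L$, which is where the fair-scenario and special-pair hypotheses enter.

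For the \emph{sufficiency} direction I would exhibit an algorithm in each of the four cases. The cases $\F_3$ and $\F_4$ are the easy symmetric ones: if $\lnoir^\omega \notin L$ then in every scenario of $L$ process $\blanc$ receives from $\noir$ at some finite round, so the algorithm ``$\noir$ decides its own value immediately and keeps broadcasting it, while $\blanc$ adopts and decides that value at the first message it receives from $\noir$'' solves Consensus (Termination because $\lnoir^\omega \notin L$, Agreement because both decide $\noir$'s value, Validity because that value is the common one when inputs agree); $\F_4$ is the mirror image. For $\F_1$ and $\F_2$ the idea is to decide according to which \emph{side} of the missing cut the actual scenario falls on: by Lemma~\ref{bijindex} the scenarios of a given length are linearly ordered by $ind$, and by Corollary~\ref{state} each process's local state localizes the true scenario to a contiguous range of indices. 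Removing a fair scenario ($\F_1$), or both scenarios of a special pair ($\F_2$), creates a gap in this ordering that, through $\overline{ind}$, separates $\Go$ into an ``index below'' part and an ``index above'' part; the algorithm waits until its own index range lies entirely on one side of the gap and decides $0$ or $1$ accordingly. I would then check that both processes always land on the same side (Agreement), that the gap is crossed in finite time (Termination), and that the uniform-input scenarios $\lnoir^\omega$ and $\lblanc^\omega$ sit on the $0$ and $1$ side respectively (Validity).

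The main obstacle, in both directions, is the exact matching between disconnections of the index ordering and the combinatorial objects $Fair(\Go)$ and $SPair(\Go)$. Concretely, I expect the heart of the work to be a lemma, built on Lemma~\ref{diff1} and the reflected, Gray-code-like nature of $ind$, stating that a limit value $t = \overline{ind}(w)$ is attained either by a single scenario, which is then fair, or by exactly two scenarios forming a special pair (the analogue of the two base-$3$ expansions of a rational). Given this, a single fair scenario is a genuine cut point, whereas a special-pair value requires both of its scenarios to be removed before the index order disconnects, which explains why $\F_1$ needs one missing scenario but $\F_2$ needs two. The remaining delicate point is reconciling the finite decision rounds with the infinite scenarios: I would make the necessity walk rigorous by carrying it out at a resolution $r$ large enough that every execution on the finite path has already decided, and make the sufficiency algorithms rigorous by proving that the relevant side of the gap becomes detectable by a bounded round on every scenario of $L$.
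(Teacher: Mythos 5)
Your sufficiency direction is essentially the paper's. The cases $\F_3,\F_4$ admit the simple protocol you describe (one process decides its own value at once and keeps sending it, the other adopts the first value received), and for $\F_1,\F_2$ your ``decide according to the side of the gap'' scheme is exactly the paper's Algorithm~\ref{consalgo}, whose correctness rests on the two facts you anticipate: the two local $ind$ values bracket the true index (Proposition~\ref{indexes}) and every $v\in L$ eventually drifts to distance at least $3$ from the removed scenario (Lemma~\ref{diff2}, which encodes your ``two base-$3$ expansions'' observation that $\overline{ind}$ identifies exactly the special pairs). One slip to fix: the decision rule must be ``output my own value on one side of the gap and the received value on the other'', not a fixed $0$ or $1$ per side --- with inputs $(1,1)$ under $\lnoir^\omega$ a fixed-$0$ side violates Validity. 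The paper outputs $init$ or $initother$ and separately checks that $initother$ is never $null$ when it is output.

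The necessity direction has a genuine gap. Your walk from $\lnoir^\omega$ to $\lblanc^\omega$ transfers the decision across two adjacent length-$r$ prefixes $v,v'$ with $ind(v')=ind(v)+1$ only if the non-distinguishing process of Corollary~\ref{state} has already decided by round $r$ under both chosen extensions, or if the two extensions remain indistinguishable to that process forever. The first option amounts to uniform termination, which you may not assume: there need not be any $r$ by which all $3^r$ prefixes on the path have decided (the path itself changes with $r$, and the nearby solvable adversaries $\Gamma^\omega\backslash\{w_0\}$ show that decision rounds are genuinely unbounded across such a path), so ``carry out the walk at a resolution large enough that every execution on the path has decided'' either begs the question or only rules out uniformly terminating algorithms. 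The second option forces you to append $\lblanc^\omega$ or $\lnoir^\omega$ to both prefixes, i.e.\ to cross via special pairs; but the parity of $ind$ alternates along the chain, so consecutive links demand two different, incompatible extensions of the same intermediate prefix, and equating their decisions is exactly the univalence of that prefix --- which is the point at issue. The paper closes precisely this hole with the bivalency machinery: it proves a decisive (maximal bivalent) prefix exists --- an infinite all-bivalent branch would have to be unfair, so its special-pair partner lies in $L$, terminates, and by Corollary~\ref{state} forces a decision on the branch --- and then derives the contradiction locally at that decisive prefix using a single special pair of the form $\{v\lblanc\lblanc^\omega, v\lok\lblanc^\omega\}$. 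Some device of this kind, replacing your global walk by a local analysis at the boundary between bivalence and univalence, is indispensable and is missing from your sketch.
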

We have split the set of solvable scenarios in four families for a better
understanding even though it is clear that they largely intersect.
$\F_1$ contains every scenarios for which at least one unfair scenario cannot
occur ; in $\F_2$ \emph{both} elements of a special pair cannot occur ; finally
$\F_3$ and $\F_4$ contains every scenarios for which at least one message is
received from both processes.

We present two proofs of Theorem~\ref{thm:FG11} in the following
sections.

\subsection{Application to the Coordinated Attack Problem}
\label{backto}

We consider now our question on the seven examples of
Example~\ref{ex7cases}.

The answer to possibility is obvious for the first and last cases.
In the first three cases, $S_0$, $\Tblanc$ or $\Tnoir$,
Consensus can be reached in one day (by deciding the initial value of \blanc,\noir, and \blanc respectively).
The fourth and fifth
cases are a bit more difficult but within reach of our
Theorem~\ref{thm:FG11}. We remark that the scenario
$\lblanc\lnoir\lok^\omega$ is a fair scenario that does not belong to
$C_1$, nor $S_1$. Therefore, those are solvable cases also.
 In the last
case, consensus can't be achieved ~\cite{Gray78,LynchDA}, as said before.

The following observation is also a way to derive lower bounds from computability results.

\begin{proposition}
  \label{complexity}
  Let $O\subset \Gamma^\omega$ an obstruction for the Consensus problem. Let $L\subset \Gamma^\omega$ and $r\in\N$ such that $Pref_r(O)\subset Pref_r(L)$ then,
  Consensus can not be solved in $L$ in less than $r$ steps.
\end{proposition}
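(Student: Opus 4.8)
The plan is to prove the contrapositive by a prefix-indistinguishability argument. Suppose, for contradiction, that some algorithm $\algo$ solves Consensus for $L$ in fewer than $r$ steps, meaning that for every scenario $x\in L$ there is a deciding prefix $u_x\in Pref(x)$ with $len(u_x)<r$ at which $s^\blanc(u_x)$ and $s^\noir(u_x)$ satisfy the three conditions of Section~\ref{defconsensus}. I will show that the very same algorithm $\algo$ then solves Consensus for $O$, contradicting the hypothesis that $O$ is an obstruction. This is essentially a reduction: a fast algorithm for $L$ is repurposed as an algorithm for $O$.

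First I would fix an arbitrary scenario $w\in O$ together with an arbitrary assignment of initial values to the two processes. Since $Pref_r(O)\subset Pref_r(L)$, the length-$r$ prefix $w_{\mid r}$ belongs to $Pref_r(L)$, so there exists a scenario $w'\in L$ with $w'_{\mid r}=w_{\mid r}$. Running $\algo$ under $w'$, the decision occurs at some prefix $u'=u_{w'}$ with $len(u')<r$.

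The key observation is that the local states $s^\proc(u)$ of a partial scenario $u$ are determined by $u$ and the initial values alone, independently of how $u$ is extended into a full scenario, since the execution defined in Section~\ref{execution} is deterministic given $\algo$. Because $len(u')<r$, the word $u'$ is a prefix of $w'_{\mid r}=w_{\mid r}$, hence a prefix of $w$ as well. Consequently the states reached at $u'$ under $w$ and under $w'$ coincide for both processes, so the three consensus conditions already satisfied at $u'$ under $w'$ are also satisfied at $u'$ under $w$, again with $len(u')<r$.

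Since $w\in O$ and the initial values were arbitrary, this shows that $\algo$ solves Consensus for $O$ (indeed in fewer than $r$ steps), contradicting the assumption that $O$ is an obstruction. Hence no algorithm can solve $L$ in fewer than $r$ steps. I expect the only delicate point to be the precise handling of the deciding round: one must verify that a decision strictly before round $r$ depends solely on the shared prefix $w_{\mid r}$, which is exactly why the strict inequality $len(u')<r$ combined with $w'_{\mid r}=w_{\mid r}$ is what makes the two executions indistinguishable up to the decision.
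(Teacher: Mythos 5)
Your proof is correct and follows essentially the same reduction as the paper's (one-sentence) justification: a sub-$r$-round algorithm for $L$ would, via the prefix inclusion $Pref_r(O)\subset Pref_r(L)$ and the fact that $s^\proc(u)$ depends only on the partial scenario $u$ and the inputs, also solve Consensus on $O$, contradicting that $O$ is an obstruction. Your write-up is merely more explicit than the paper's about why the deciding prefix $u'$ with $len(u')<r$ transfers from $w'\in L$ to $w\in O$.
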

Indeed if every prefixes of length $r$ of $O$ in which Consensus is
unsolvable are also prefixes of $L$, then after $r$ rounds of any scenario
$w\in L$, processes solving Consensus in $L$ would be mean it is also solvable in $O$.

Now, using Proposition~\ref{complexity} for the fourth and fifth cases of Example~\ref{ex7cases} yields the following summary
by remarking that their first round are exactly the same as $\Gamma^\omega$.

\begin{enumerate}
\item $S_0$ is solvable in $1$ round,
\item $T_\blanc$ is solvable in $1$ round,
\item $T_\noir$ is solvable in $1$ round,
\item $C_1$ is solvable in exactly $2$ rounds,
\item $S_1$ is solvable in exactly $2$ rounds.
\end{enumerate}

\subsection{About Minimal Obstructions}

Theorem~\ref{thm:FG11} shows that, even in the simpler subclass where
no double omission are permitted, simple inclusion-minimal adversaries may not
exist. Indeed, there exists a sequence of unfair scenarios
$(u_i)_{i\in\N)}$ such that $\forall i,j$, $(u_i,u_j)$ is not a
special pair. Therefore $L_n=\Gamma^\omega \backslash
{\bigcup}_{0\leq i\leq n}u_i$ defines an infinite
decreasing sequence of obstructions for the Coordinated Attack
Problem.

Considering the set of words in $SPair(\Gamma^\omega)$,
it is possible by picking up only one member of a special pair
to have an infinite set $U$ of
unfair scenarios such that,  by
Theorem~\ref{thm:FG11}, the adversary
$\Gamma^\omega\backslash U$ is a minimal obstruction.
So there is no minimum obstruction.

As a partial conclusion, we shall say that the well known adversary
$\Gamma^\omega$, even not being formally a minimal obstruction, could
be considered, without this being formally defined, as the smallest
example of a \emph{simple obstruction}, as it is more straightforward to describe than, say the adversaries
$\Gamma^\omega\backslash U$ above. 

This probably explains why
the other obstructions we present here have never been investigated
before.

\section{Combinatorial Characterization of Solvable Adversaries}

In this part, we consider the adversaries without double omission, that is
the adversaries $L\subset\Gamma^\omega$ and we characterize exactly which one
are solvable. When the consensus is achievable, we
also give an effective algorithm. In this section, we rely on a classical combinatorial bivalency technique \cite{FLP85}.

\subsection{Necessary Condition: a Bivalency Impossibility Proof}
\label{CN}

We will use a standard, self-contained, bivalency proof technique.
We suppose now on that there is an algorithm \algo to solve Consensus
on $L$.

We proceed by contradiction. So we suppose that all the conditions of
Theorem~\ref{thm:FG11} are not true, \ie that
$Fair(\Gamma^\omega)\cup
\{\lnoir^\omega,\lblanc^\omega\} \subset L,$
and that for all $(w,w')\in SPair(\Gamma^\omega), w\notin L
\Longrightarrow w'\in L.$

\begin{definition}
  Given an initial configuration, let $v\in Pref(L)$ and
  $i\in\{0,1\}$. The partial scenario $v$ is
  said to be \emph{$i$-valent} if, for all scenario $w\in L$ such that $v\in
  Pref(w)$, \algo decides $i$ at the end. If $v$ is not $0-$valent nor
  $1-$valent, then it is said \emph{bivalent}.
\end{definition}

By hypothesis, $\lblanc^\omega,\lnoir^\omega\in L$. Consider
our algorithm \algo with input 0 on both processes running under
scenario $\lblanc^\omega$. The algorithm terminates and has to output
$0$ by Validity Property. Similarly, both processes output $1$ under
scenario $\lnoir^\omega$ with input $1$ on both processes.

From now on, we have this initial configuration $I$: $0$ on process \blanc
and $1$ on process \noir. For \blanc there is no difference under
scenario $\lnoir^\omega$ for this initial configurations and the
previous one. Hence,
\algo will output $0$ for scenario $\lnoir^\omega$. Similarly, for
\noir there is no difference under scenario $\lblanc^\omega$ for both
considered initial
configurations. Hence, \algo will output $1$ for this other scenario.
Hence $\varepsilon$ is bivalent for initial configuration $I$. In the
following, valency will always be implicitly defined with respect to
the initial configuration $I$.

\begin{definition}
  Let $v\in Pref(L)$, $v$ is \emph{decisive} if
  \begin{theoenum}
  \item $v$ is bivalent,
  \item For any $a\in\Gamma$ such that $va\in Pref(L)$, $va$ is not
    bivalent.
  \end{theoenum}
\end{definition}

\begin{lemma}
  There exist a decisive $v\in Pref(L).$
\end{lemma}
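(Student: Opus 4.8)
The plan is to argue by contradiction: I assume that no $v\in Pref(L)$ is decisive and build an infinite scenario all of whose prefixes are bivalent, then contradict Termination. We already established that $\varepsilon$ is bivalent for the initial configuration $I$. If no bivalent prefix were decisive, then every bivalent $v\in Pref(L)$ would admit some $a\in\Gamma$ with $va\in Pref(L)$ and $va$ still bivalent. Starting from $v_0=\varepsilon$ and repeatedly choosing such an extension, I obtain an increasing chain $v_0,v_1,\dots$ of bivalent partial scenarios whose limit is an infinite scenario $w\in\Go$ with $w_{\mid n}$ bivalent for every $n$. Equivalently, the tree whose nodes are the bivalent prefixes is finitely branching (at most three children per node) and, under the assumption, has no leaf, so K\"onig's lemma produces such an infinite branch.

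The structural fact I would isolate is that a bivalent partial scenario cannot be a configuration in which both processes have already decided: if at $v$ both $s^\blanc(v)$ and $s^\noir(v)$ were decision states carrying a common value $i$, then by irrevocability of decisions and the Agreement property every scenario of $L$ extending $v$ would output $i$, making $v$ be $i$-valent, hence not bivalent. Consequently, along the branch $w$ no prefix $w_{\mid n}$ is a round at which both processes have decided. Therefore, \emph{provided} $w\in L$, the run of $\algo$ under $w$ never reaches a decided configuration at any finite round, which contradicts Termination. The contradiction closes as soon as $w\in L$ is secured.

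\textbf{The main obstacle} is precisely that $L$ is an arbitrary, not necessarily topologically closed, adversary, so the limit $w$ of prefixes lying in $Pref(L)$ need not a priori belong to $L$. This is exactly where the three standing hypotheses enter, through a case split on $w$. If $w$ is fair, then $w\in Fair(\Go)\subset L$ and we are done immediately. If $w$ is unfair, then $w$ terminates with an infinite block of $\lblanc$ or of $\lnoir$; here I would use $\lblanc^\omega,\lnoir^\omega\in L$ together with the special-pair coverage hypothesis, namely that for every $(x,x')\in SPair(\Go)$ at least one of $x,x'$ lies in $L$, to replace $w$ by a nearby scenario $w'\in L$ whose prefixes are index-adjacent to those of $w$. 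By Corollary~\ref{state}, index-adjacency forces one of the two processes to share its state across $w_{\mid n}$ and $w'_{\mid n}$, which lets me transfer the ``no prefix is decided'' property to $w'$ and contradict Termination on $w'\in L$.

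In either case I exhibit a scenario of $L$ along which both processes never jointly decide, contradicting Termination. Hence a decisive $v\in Pref(L)$ must exist. I expect the fair case to be routine and the unfair case, with its passage through a special-pair partner and the use of Corollary~\ref{state}, to be the delicate part of the argument.
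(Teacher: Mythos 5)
Your proposal is correct and follows essentially the same route as the paper's proof: build an all-bivalent infinite branch $w$ by K\"onig's lemma, rule out $w\in L$ via Termination, use $Fair(\Go)\subset L$ to force $w$ to be unfair, then pass to a special-pair partner $w'\in L$ and invoke Corollary~\ref{state} to reach the contradiction. The only refinement worth making explicit is that bivalence already excludes even a \emph{single} process having decided (by Agreement and irrevocability, one decider makes the prefix univalent), which is exactly the form needed when you transfer decidedness between $w$ and $w'$ through the one shared state.
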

\begin{proof}
  We suppose this not true. Then, as $\varepsilon$ is bivalent, it is
  possible to construct $w\in\Gamma^\omega$ such that, $Pref(w)\subset
  Pref(L)$ and for any $v\in Pref(w)$, $v$ is bivalent. Bivalency for
  a given $v$ means in particular that the algorithm \algo has not
  stopped yet. Therefore $w\notin L$.

  This means that $w$ is unfair, because from initial assumption, $Fair(\Gamma^\omega)\subset L$. Then this means that,
  w.l.o.g we have $u\in\Gamma^+$, such that $w=u\lblanc^\omega$ and
  $ind(u)$ is even. We denote by
  $w'=ind^{-1}(ind(u)-1)\lblanc^\omega$.
  The couple $(w,w')$ is a special pair. Therefore $w'$ belongs to
  $L$, so \algo halts at some round $r_0$ under scenario $w'$.

  By Corollary~\ref{state}, $s_\noir(w_{|r_0})=s_\noir(w'_{|r_0})$
  This means that $w|_{r_0}$ is not bivalent. As $w|_{r_0}\in Pref(L)$
  this gives a contradiction.
\end{proof}

We can now end the impossibility proof.

\begin{proof}
  Consider a decisive $v\in Pref(L)$. By definition, this means that
  there exist $a,b\in\Gamma$, with $a\neq b$, such that $va,vb\in
  Pref(L)$ and $va$ and $vb$ are not bivalent and are of different
  valencies. Obviously there is an extension that has a different valency of the one of $\lok$
  and w.l.o.g., we choose $b$ to be $\lok$. Therefore $a=\lblanc$ or $a=\lnoir$.
  We terminate by a case by case analysis.

  Suppose that $a=\lblanc,$ and $ind(v)$ is even.
  By Corollary~\ref{state}, this means \noir is in the same state
  after $va$ and $vb$. We consider the scenarios $va\lblanc^\omega$
  and $vb\lblanc^\omega$, they forms a special pair. So one belongs to $L$
  by hypothesis, therefore both processes should halt at some point under
  this scenario. However, the state of \noir is always the same
  under the two scenarios because it receives no message at all
  so if it halts and decides some value, we get
  a contradiction with the different valencies.

  Other cases are treated similarly using the other cases of Corollary~\ref{state}.
\end{proof}

\subsection{A Consensus Algorithm}
\label{CS}
Given a word $w$ in $\Gamma^\omega$, we define the following algorithm
$\algo_w$ (see Algorithm~\ref{consalgo}).
It has messages always of the same type. They
have two components, the first one is the initial bit, named
$init$. The second is an integer named $ind$. Given a message $msg$,
we note $msg.init$ (resp. $msg.ind$) the first (resp. the second)
component of the message.

\begin{algorithm}
  \KwData{$w\in\Gamma^\omega$}
  \KwIn{$init\in\{0,1\}$}

  r=0\;
  initother=null\;
  \eIf{$\proc=\blanc$}{ind=0\;}{ind=1\;}
  \While{$|ind - ind(w|_r)| \leq 2$}{
    msg = (init,ind)\;
    send(msg)\;
    msg = receive()\;
    \eIf(// message was lost){msg == null}
    {$ind = 3*ind$\;}
    {$ind = 2*msg.ind+ind$\;
    initother = msg.init\;}
    r=r+1\;
  }

  \eIf{$\proc=\blanc$}{%
    \eIf{$ind < ind(w|_r)$}
    {\KwOut{init}}
    {\KwOut{initother}}
  }{
    \eIf{$ind > ind(w|_r)$}
    {\KwOut{init}}
    {\KwOut{initother}}
  }
    \caption{\label{consalgo}Consensus Algorithm $\algo_w$ for Process \proc:}
\end{algorithm}

We prove that the $ind$ values
computed by each process in the algorithm differ by only one.
Moreover, we show that the actual index is equal to the minimum of the $ind$ values.

More precisely, with $sign(n)$ being $+1$ (resp. $-1$) when $n\in\Z$ is
positive (resp. negative), we have 
\begin{proposition}\label{indexes}
  For any round $r$ of an execution of Algorithm $\algo_w$ under
  scenario $v\in\Gamma^{r}$, such that no process has already halted,
  $$\left\{\mbox{
      \begin{minipage}[c]{0.6\linewidth}
        $|ind^\noir_r-ind^\blanc_r| = 1$,\\
        $sign(ind^\noir_r-ind^\blanc_r) = (-1)^{ind(v)}$,\\
        $ind(v) = \min\{ind^\blanc_r,ind^\noir_r\}.$
      \end{minipage}}\right.
    $$
\end{proposition}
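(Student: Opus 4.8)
The plan is to prove the three equalities simultaneously by induction on $r$, since they are mutually dependent: the sign of $ind^\noir_r-ind^\blanc_r$ tells us which process realizes the minimum, and the \emph{parity} of that minimum is exactly what selects the factor $(-1)^{ind(u)}$ appearing in the recursive definition $ind(ua)=3\,ind(u)+(-1)^{ind(u)}\mu(a)+1$. Here $ind^\blanc_r$ and $ind^\noir_r$ denote the values of the variable $ind$ held by \blanc and \noir after the $r$-th iteration of the while loop. For the base case $r=0$, Algorithm~\ref{consalgo} initializes $ind^\blanc_0=0$ and $ind^\noir_0=1$, so $|ind^\noir_0-ind^\blanc_0|=1$, the sign of the difference is $+1=(-1)^{ind(\varepsilon)}$, and $\min\{0,1\}=0=ind(\varepsilon)$, matching all three claims.

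For the induction step I would write $v=ua$ with $u\in\Gamma^{r-1}$ and $a\in\Gamma$, put $\alpha=ind^\blanc_{r-1}$, $\beta=ind^\noir_{r-1}$, and assume the three invariants for $u$. The hypothesis gives $|\beta-\alpha|=1$ with $\min\{\alpha,\beta\}=ind(u)$, and the sign invariant tells us which of \blanc, \noir holds the minimum according to the parity of $ind(u)$. I then carry out a case analysis on the transmitted letter $a$, reading the new values off the update rules: a process that receives $null$ replaces its $ind$ by $3\,ind$, while a process that receives $msg$ replaces it by $2\,msg.ind+ind$. Concretely, if $a=\lok$ both receive, so $ind^\blanc_r=2\beta+\alpha$ and $ind^\noir_r=2\alpha+\beta$; if $a=\lblanc$ then \noir gets $null$ and \blanc receives, giving $ind^\noir_r=3\beta$ and $ind^\blanc_r=2\beta+\alpha$; if $a=\lnoir$ the situation is symmetric with $ind^\blanc_r=3\alpha$ and $ind^\noir_r=2\alpha+\beta$.

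In each case a one-line computation shows $ind^\noir_r-ind^\blanc_r$ equals $\pm(\beta-\alpha)$, so the absolute difference stays $1$ (first claim), with the sign flipping for $\lok$ and being preserved for $\lblanc,\lnoir$; this is exactly the parity of $ind(ua)$ read off from its recursive definition using $(-1)^{3m}=(-1)^m$ and the fact that $(-1)^{ind(u)}\mu(a)+1$ is even when $a\in\{\lblanc,\lnoir\}$ (second claim). For the third claim I would evaluate $\min\{ind^\blanc_r,ind^\noir_r\}$ by splitting into the sub-cases $\beta=\alpha+1$ and $\alpha=\beta+1$; using $\min\{\alpha,\beta\}=ind(u)$ together with the parity of that minimum supplied by the sign invariant, each sub-case collapses to $3\,ind(u)+(-1)^{ind(u)}\mu(a)+1=ind(ua)$.

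The only delicate point, and the step I expect to be the main obstacle, is keeping the three invariants entangled correctly: neither the $\min$ equality nor the sign equality is inductively self-sustaining in isolation, because the update rules of the algorithm do not distinguish which process currently holds the smaller index, whereas the recursive term $(-1)^{ind(u)}\mu(a)$ depends precisely on the parity of that smaller index. It is the sign invariant that supplies this missing parity information, so the case analysis must branch on $sign(\beta-\alpha)$ — equivalently on the parity of $ind(u)$ — inside each of the three letter cases. Once this bookkeeping is in place, every branch reduces to verifying the single identity $ind(ua)=3\,ind(u)+(-1)^{ind(u)}\mu(a)+1$, so the remaining computations are routine.
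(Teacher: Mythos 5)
Your proposal is correct and follows essentially the same route as the paper's proof: a simultaneous induction on $r$ over all three invariants, a case analysis on the last letter $a$ using the update rules of Algorithm~\ref{consalgo}, the observation that the difference flips sign under $\lok$ and is preserved under $\lblanc,\lnoir$ matching the parity of $ind(ua)$, and a final sub-case split on which process holds the minimum to recover $ind(ua)=3\,ind(u)+(-1)^{ind(u)}\mu(a)+1$. The ``delicate point'' you flag (the sign invariant supplying the parity needed for the $\min$ computation) is exactly how the paper's argument is organized as well.
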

\begin{proof}
  We prove the result by induction over $r\in\N$.

  For $r=0$, the equations are satisfied.

  Suppose the property is true for $r-1$. We consider a round of the
  algorithm. Let $u\in\Gamma^{r-1}$ and $a\in\Gamma$, and consider
  an execution under environment $w=ua$. There are
  exactly three cases to consider.

  Suppose $a=\lok$. Then it means both messages are received and
  $ind^\blanc_r=2ind^\noir_{r-1}+ind^\blanc_{r-1}$ and
  $ind^\noir_r=2ind^\blanc_{r-1}+ind^\noir_{r-1}.$ Hence
  $ind^\noir_r-ind^\blanc_r = ind^\blanc_{r-1}-ind^\noir_{r-1}$.
  Thus, using the recurrence property, we get
  $|ind^\noir_r-ind^\blanc_r| = 1.$

  Moreover, by construction
  \begin{eqnarray*}
    ind(v)&=&ind(ua)\\
    &=&3ind(u)+(-1)^{ind(u)}(\mu(a))+1\\
    &=&3ind(u)+1
  \end{eqnarray*}

  The two indices $ind(u)$ and $ind(v)$ are therefore of
  opposite parity. Hence, by induction property,
  $sign(ind^\noir_r-ind^\blanc_r) = (-1)^{ind(v)}$.
  And remarking that $\min\{ind^\blanc_r,ind^\noir_r\} =$
  \begin{eqnarray*}
    &&
    \min\{2ind^\noir_{r-1}+ind^\blanc_{r-1},2ind^\blanc_{r-1}+ind^\noir_{r-1}\}\\
    &=&ind^\noir_{r-1}+ind^\blanc_{r-1}+\min\{ind^\blanc_{r-1},ind^\noir_{r-1}\}\\
    &=&2\min\{ind^\blanc_{r-1},ind^\noir_{r-1}\}+|ind^\noir_{r-1}-ind^\blanc_{r-1}|\\
    & &           +\min\{ind^\blanc_{r-1},ind^\noir_{r-1}\}\\
    &=&3\min\{ind^\blanc_{r-1},ind^\noir_{r-1}\}+1
  \end{eqnarray*}
  we get that the third equality is also verified in round $r$.

   Consider now the case $a=\lnoir$. Then \blanc gets no message from
  \noir and \noir gets a message from \blanc. So we have that
  $ind^\blanc_r=3ind^\blanc_{r-1}$ and
  $ind^\noir_r=2ind^\blanc_{r-1}+ind^\noir_{r-1}.$
  So we have that $ind^\noir_r-ind^\blanc_r = ind^\noir_{r-1}-ind^\blanc_{r-1}$.
  The first equality is satisfied.

  We also have that $ind(v)=3ind(u)+(-1)^{ind(u)}\mu(\lnoir)+1=
  3ind(u)+\alpha,$ with $\alpha=0$ if $ind(u)$ is even and $\alpha=2$
  otherwise. Hence, $ind(u)$ and $ind(v)$ are of the same parity, and
  the second equality is also satisfied.

  Finally, we have that $\min\{ind^\blanc_r,ind^\noir_r\}=
  \{3ind^\blanc_{r-1},2ind^\blanc_{r-1}+ind^\noir_{r-1}\}$. If
  $ind(u)$ is even, then $ind(u)=ind^\blanc_{r-1}$ and the equality
  holds. If $ind(u)$ is odd, then $ind(u)=ind^\noir_{r-1}$ and the
  equality also holds.

  The case $a=\lblanc$ is a symmetric case and is proved similarly.
\end{proof}

\subsection{Correctness of the Algorithm}

Given a message adversary $L$, we suppose that one of the following holds.
  \begin{itemize}
  \item $\exists f\in Fair(\Gamma^\omega), f\notin L,$
  \item $\exists (u,u')\in SPair(\Gamma^\omega), u,u'\notin L,$
  \item $\lnoir^\omega\notin L,$
  \item $\lblanc^\omega\notin L.$
  \end{itemize}

In particular, $L\subsetneq\Gamma^\omega$ and we denote by $w$, a scenario
in $\Gamma^\omega\backslash L$. If we can choose $w$ such that it is fair, we choose such a $w$.
Otherwise $w$ is unfair, and we assume it is
either $\lnoir^\omega$ or $\lblanc^\omega$, or it belongs to a special
pair that is not included in $L$.

We consider the algorithm $\algo_w$ with parameter $w$
as defined above.

\begin{lemma}\label{diff2}
  Let $v\in L$. There exist $r\in\N$ such that $|ind(v|_r)-ind(w|_r)|\geq3.$
\end{lemma}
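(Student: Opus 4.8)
The plan is to track the evolution of the integer gap $d_r := |ind(v_{|r}) - ind(w_{|r})|$ as a function of the round $r$, and to show that this gap cannot stay bounded by $2$ unless $v$ and $w$ form a special pair, a situation excluded by the way $w$ was chosen. The engine of the whole argument is the elementary observation, already implicit in the definition of $ind$, that a partial scenario of index $m$ has exactly three one-letter extensions whose indices are precisely $3m$, $3m+1$ and $3m+2$ (the order in which $\lnoir, \lok, \lblanc$ realise these values depends only on the parity of $m$, as recorded in Lemma~\ref{diff1}). Consequently, by Lemma~\ref{bijindex}, the index of any length-$r$ extension of a fixed length-$k$ prefix of index $p$ lies in the block $[\,3^{r-k}p,\ 3^{r-k}(p+1)-1\,]$.

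First I would establish the dichotomy. Since $v\in L$ and $w\notin L$ we have $v\neq w$, so the two scenarios eventually take different letters and $d_r$ becomes positive. I would then show that the gap is \emph{expanding} as soon as it reaches $2$: if $ind(v_{|r})=m$ and $ind(w_{|r})=m+2$, then every extension of $v_{|r}$ has index at most $3m+2$ while every extension of $w_{|r}$ has index at least $3(m+2)=3m+6$, whence $d_{r+1}\geq 4\geq 3$. Therefore one of two things happens: either $d_r\geq 2$ for some $r$, in which case $d_r\geq 3$ for that same $r$ or for $r+1$ and we are done; or $d_r\leq 1$ for every $r$, in which case, since $v\neq w$, the pair $(v,w)$ satisfies the defining condition of $SPair(\Go)$, i.e.\ it is a special pair. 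It thus remains to prove that, for each admissible choice of $w$, no element $v\in L$ can form a special pair with $w$.

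The last step splits according to how $w$ was chosen. When $w$ is fair, I would use Lemma~\ref{diff1} to show that both members of any special pair are unfair, so that $w$ belongs to no special pair at all: once the gap equals $1$, maintaining it forces both scenarios to append the same letter ($\lblanc$ when the smaller index is even, $\lnoir$ when it is odd), and since this parity is preserved under extension the common tail is constant, hence both scenarios end in $\lblanc^\omega$ or $\lnoir^\omega$. When $w=\lnoir^\omega$ (resp.\ $\lblanc^\omega$), its index is pinned at $0$ (resp.\ $3^r-1$) for every $r$; a partner would need index at most $1$ (resp.\ at least $3^r-2$) at every round, but a node of index $1$ has all its children of index $\geq 3$, so the only scenario of index $\leq 1$ throughout is $\lnoir^\omega$ itself, and symmetrically for $\lblanc^\omega$; these extremal scenarios therefore have no partner. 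Finally, when $w=u$ is one member of a special pair $(u,u')$ with $u,u'\notin L$, I would prove that a special pair determines its partner \emph{uniquely}, so that the only $v$ with $(v,u)\in SPair(\Go)$ is $u'\notin L$, incompatible with $v\in L$. This uniqueness is the delicate point: a special pair diverges at a single first position where exactly one of the two letters is $\lok$, after which both tails are the forced constant computed above; I would check that for a given $u$ the divergence position and the accompanying letter are both forced, since the ``wrong side'' candidate neighbour would require $u$ to append a different constant letter than it actually does. Establishing this uniqueness cleanly is the main obstacle; the expanding-gap dichotomy and the two extremal cases are routine once the $\{3m,3m+1,3m+2\}$ structure is in hand.
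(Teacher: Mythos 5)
Your proposal is correct and follows essentially the same route as the paper: show the index gap, once it reaches $2$, jumps to at least $3$ (in fact $4$) at the next round, so the only escape is for $(v,w)$ to be a special pair, which is then ruled out by the choice of $w$. You actually supply more detail than the paper does — in particular the case analysis showing that fair scenarios and $\lnoir^\omega,\lblanc^\omega$ have no special-pair partner, and that a member of a special pair has a \emph{unique} partner (the step the paper compresses into ``given the way we have chosen $w$ \dots this is impossible'') — and your sketch of that uniqueness argument via the forced divergence position and forced constant tail is sound.
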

\begin{proof}
  Given $w\notin L$, we have $v\neq w$ and at some round $r$, $w|_r\neq v|_r.$
  Therefore $|ind(v|_r)-ind(w|_r)|\geq1$.

  From Lemma~\ref{diff1} and Definition~\ref{pair}, it can be seen
  that the only way to remain indefinitely at a difference of one is
  exactly that $w$ and $v$ form a special pair. Given the way we
  have chosen $w$, and that $v\in L$, this is impossible.
  So at some round $r'$, the difference will be greater than 2 :
  $|ind(v|_{r'})-ind(w|_{r'})|\geq2$.

  Then by definition of the index we have that
  $|ind(v|_{r'+1})-ind(w|_{r'+1})|\geq3$.
\end{proof}

Now we prove the correctness of the algorithm under the message adversary $L$.

\begin{proposition}
  The algorithm $\algo_w$ is correct for every $v\in L$.
\end{proposition}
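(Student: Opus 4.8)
The plan is to show that Algorithm $\algo_w$ satisfies Termination, Validity, and Agreement for every scenario $v\in L$, exploiting the invariant established in Proposition~\ref{indexes} together with the separation guarantee of Lemma~\ref{diff2}. First I would dispose of Termination. By Lemma~\ref{diff2}, for every $v\in L$ there exists a round $r$ at which $|ind(v_{|r})-ind(w_{|r})|\geq 3$. As long as no process has halted, Proposition~\ref{indexes} tells us that the two local indices $ind^\blanc_r,ind^\noir_r$ differ by exactly $1$ and that $ind(v_{|r})=\min\{ind^\blanc_r,ind^\noir_r\}$; hence both local indices lie within distance $1$ of $ind(v_{|r})$, so once $ind(v_{|r})$ is at distance $\geq 3$ from $ind(w_{|r})$, each local index is at distance $\geq 2$, violating the loop guard $|ind-ind(w_{|r})|\leq 2$. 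Therefore both processes exit the \textbf{while} loop and decide, giving Termination.

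Next I would establish Agreement. The key observation is that the two processes leave the loop with indices $ind^\blanc_r$ and $ind^\noir_r$ straddling $ind(w_{|r})$ in a consistent way: by Proposition~\ref{indexes}, $ind(v_{|r})=\min\{ind^\blanc_r,ind^\noir_r\}$ and the difference is $1$, while by the exit condition $ind(v_{|r})$ is at distance $\geq 3$ below or above $ind(w_{|r})$. I would argue that both local indices fall on the \emph{same side} of $ind(w_{|r})$ (both strictly less, or both strictly greater), because they differ by only $1$ whereas each is separated from $ind(w_{|r})$ by a gap of at least $2$. Consequently, the decision rule---$\blanc$ outputs $init$ when $ind<ind(w_{|r})$ and $initother$ otherwise, with the symmetric test for $\noir$---makes the two processes play complementary roles: the process whose index is the minimum keeps its own $init$, and by Proposition~\ref{indexes} its $init$ has been propagated as $initother$ to the other process. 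Here I would invoke the case analysis on parity (via $sign(ind^\noir_r-ind^\blanc_r)=(-1)^{ind(v_{|r})}$) to check that the sign of $ind^\blanc_r-ind^\noir_r$ determines which process is the minimizer, and that exactly that process's initial bit is adopted by both, yielding a common decision.

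Finally Validity follows almost immediately: if both processes start with the same bit $init$, then the value adopted under Agreement is necessarily that common bit, since the decided value is always one of the two processes' $init$ fields (either its own $init$ or the $initother$ it received, which is the peer's $init$). The step I expect to be the main obstacle is the bookkeeping in the Agreement argument: I must verify that $initother$ has in fact been set to the minimizing process's initial bit by the time of halting---that is, that the process which ends up adopting $initother$ really did receive a message carrying the other's $init$ at the decisive moment, rather than holding a stale or $null$ value. This requires tracking, through the index recurrence of Proposition~\ref{indexes}, exactly when the last successful transmission from the minimizing process occurred and confirming that the decision rule reads the correct bit. Pinning down this synchronization between the $ind$ updates and the $initother$ updates is the delicate part; the termination and validity claims are comparatively routine once it is settled.
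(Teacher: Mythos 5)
Your overall strategy is the paper's: Termination from Lemma~\ref{diff2} combined with the invariant of Proposition~\ref{indexes}, Agreement from the observation that both local indices land on the same side of $ind(w_{|r})$, and Validity from the fact that only initial bits are ever output. However, there are two genuine gaps. First, your Termination argument has an off-by-one that hides a real case: from $ind(v_{|r})=\min\{ind^\blanc_r,ind^\noir_r\}$ and $|ind^\noir_r-ind^\blanc_r|=1$ you correctly get that each local index is at distance $\geq 2$ from $ind(w_{|r})$, but distance exactly $2$ does \emph{not} violate the guard $|ind-ind(w_{|r})|\leq 2$, so the process holding the larger index may well stay in the loop when $ind(v_{|r})<ind(w_{|r})$. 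The paper closes this case separately: the other process has halted, so the lagging process receives no message in the next round, its $ind$ is tripled, and since $ind(w_{|r+1})$ lies within $2$ of $3\,ind(w_{|r})$ the gap becomes at least $3\cdot 2-2=4$, so it exits one round later. Without this step your proof does not establish that \emph{both} processes terminate.

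Second, the ``delicate part'' you flag is indeed where the work is, but you leave it unresolved and the route you sketch is off. It is not the \emph{minimizer} that keeps its own $init$: which of $\blanc,\noir$ holds the minimum is governed by the parity of $ind(v_{|r})$ (via $sign(ind^\noir_r-ind^\blanc_r)=(-1)^{ind(v_{|r})}$), whereas the decision rule keys on which side of $ind(w_{|r})$ the pair lands --- if both indices are below, both decide $init^\blanc$; if both are above, both decide $init^\noir$. The missing non-nullity argument is then short: $initother^\noir$ is $null$ only if $\noir$ never received a message, i.e.\ $v_{|r}=\lblanc^r$, whose index $3^r-1$ is maximal, which forces $ind^\noir_r>ind(w_{|r})$ and hence $\noir$ outputs its own $init$ rather than $initother$; symmetrically for $\blanc$ with $\lnoir^r$ and index $0$. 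No tracking of ``when the last successful transmission occurred'' is needed, since $init$ never changes and any received message carries the peer's initial bit. With these two repairs your argument coincides with the paper's proof.
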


\begin{proof}
  First we show Termination. This is a corollary of
  Lemma~\ref{diff2} and Proposition~\ref{indexes}.

  Consider the execution $v\in L$.
  From Lemma~\ref{diff2}, there exists a round $r\in\N$ such that $|ind(v|_r)-ind(w|_r)|\geq3.$

  Denote by $r$ the round when it is first
  satisfied for one of the process. From the condition of the While loop, it means this process will stop at round $r$. 
  If the $ind$ value of the other process \proc at round $r$ is also at distance 3 or more from the index of $w|_r$,
  then we are done. Otherwise, 
  from Proposition~\ref{indexes}, we have
  $|ind^\proc_r-ind(w|_r)| = 2.$
  In the following round, \proc
  will receive no message (the other process has halted) and
  $|ind^\proc_r-ind(w|_r)| \geq3$ will hold .
  Note also, that even though the final comparisons are strict, the output value is well defined at the end
  since $ind_r\neq ind(w|_r)$ at this stage.

  The validity property is also obvious, as the output values are ones
  of the initial values.  Consider the case for \noir (the case for \blanc is symmetric).
  Since the only case where
  $initother^\noir$ would be $null$ is when \noir has
  not received any message from \blanc, \ie when
  $v=\lblanc^r$. But as $ind(\lblanc^r)=3^r-1$ is the maximal possible
  index for scenario of length $r$, and \noir outputs  $initother^\noir$ only if
  $ind^\noir < ind(w|_r)$, it cannot have to output
  $initother$ when it is $null$.
  Similarly for \blanc, this proves that $null$ can never be output by any process.

  We now prove the agreement property. Given that
  $|ind^\blanc_r-ind^\noir_r|=1$ by Proposition~\ref{indexes}, when
  the processes halt, from Lemma~\ref{diff2} the $ind$ values are on the same side of
  $ind(w{|_r})$. This means that one of the process outputs $init$, the
  other outputting $initother$. By construction, they
  output the same value.
\end{proof}

\section{Topological approach}
\label{sec:topo}
In this Section, we provide a topological characterization of solvable message
adversaries for the Consensus Problem.
First, we will introduce some basic topological
definitions in Section~\ref{sub:def_topo}, then we will explain the link
between topology and distributed computability in
Section~\ref{sub:topo_rep_of_dist_syst} in order to formulate our result in
Section~\ref{sub:topo_carac}.

We then show in the following
Section~\ref{sub:link_with_the_combinatorial_characterization} how this new
characterization matches the combinatorial one described by
Theorem~\ref{thm:FG11}.
We also discuss a similar characterization in \cite{GKM14} and we show
that our result indicates that there is a flaw in the statement of
Theorem 6.1 of \cite{GKM14}. If no restriction are given on the kind of adversary are addressed in this theorem,
then, in the case of 2 processes, this would imply that $\Gamma^\omega\backslash\{w\}$ is solvable for any $w$.
From our result, this is incorrect when $w$ belongs to a special pair.
This has been confirmed by the authors \cite{K15} that the statement has to be corrected by restricting to adversaries $L$ that are closed for special pairs (if $\{w,w'\}$ is a special pair, then $w\in L \Leftrightarrow w'\in L$).
Moreover, even if the present work is for only 2 processes,
the approach that is taken might help correct the general statement of \cite{GKM14}.

\subsection{Definitions}
\label{sub:def_topo}

The following definitions
are standard definitions from algebraic topology \cite{Munkres84}.
We fix an integer $N\in\N$ for this part. 

\begin{definition}
  Let $n\in\N$.
  A finite set $\sigma=\{v_0,\dots,v_n\}\subset\R^N$ is called a \emph{simplex} of dimension $n$ if the vector space generated by
  $\{v_1-v_0,\dots,v_n-v_0\}$ is of dimension $n$.
  We denote by $|\sigma|$ the convex hull of $\sigma$ that we call
  the \emph{geometric realization} of $\sigma$.
\end{definition}

\begin{definition} 
A \emph{simplicial complex} is a collection $C$ of \emph{simplices}
such that~:
\begin{enumerate}[(a)]
  \item If $\sigma\in C$ and $\sigma'\subseteq\sigma$, then $\sigma'\in C$,
  \item If $\sigma,\tau\in C$ and $|\sigma|\cap|\tau|\neq\emptyset$ then there exists $\sigma'\in C$ such that
    \begin{itemize}
    \item $|\sigma|\cap|\tau|=|\sigma'|$,
    \item $\sigma'\subset\sigma, \sigma'\subset\tau.$
    \end{itemize}
\end{enumerate}
\end{definition}

The simplices of dimension 0 (singleton) of $C$ are called vertices,
we denote $V(C)$ the set of vertices.
The \emph{geometric realization} of $C$, denoted $|C|$, is the union
of the geometric realization of the simplices of $C$.

Let $A$ and $B$ be simplicial complexes. A map $f\colon V(A)\to V(B)$ is called
\emph{simplicial} (in which case we write $f\colon A\to B$) if it preserves the
simplices, \ie for each simplex $\sigma$ of $A$, the image $f(\sigma)$ is a
simplex of $B$.

In this paper, we also work with colored simplicial complexes. These
are simplicial complexes $C$ together with 
a function $c:V(C)\to \Pi$ such that the restriction of $c$ on any
maximal simplex of $C$ is a bijection. A simplicial map that preserves colors is called chromatic.

As a final note, since we only deal with two processes, our simplicial complexes will be of
dimension 1. The only simplices are edges (sometimes called \emph{segments})
and vertices, and the latter are colored with $\Pi=\{\b,\n\}$.

\begin{remark}\label{subtle}  
  The combinatorial part of simplicial complexes (that is the sets of
  vertices and the inclusion relationships they have) is usually
  referred as abstract simplicial complexes. Abstract simplicial complex can be equivalently defined as a
  collection $C$ of sets that are closed by inclusion, that is if
  $S\in C$ and if $S'\subset S$ then $S'\in C$.

  For finite complexes, the
  topological and combinatorial notions are equivalent, including
  regarding geometric realizations. 
  But it should be noted that infinite abstract simplicial complex might not
  have a unique (even up to homeomorphisms) geometric realization.
  However, here the infinite complexes we deal with
  are derived from the
  subdivision, see below, of a finite initial complex, hence they have a unique
  realization in that setting. So in the context of this paper, we
  will talk about ``the'' realization of such infinite complexes.
\end{remark}

The elements of $|C|$ can be expressed as convex combinations of the
vertices of $C$, \ie $\forall x\in|C|\quad x=\underset{v\in V(C)}{\sum}\alpha_v
v$ such that $\underset{v\in V(C)}{\sum}\alpha_v=1$ and
$\{v\mid \alpha_v\neq0\}$ is a simplex of $C$.

The geometric realization of a simplicial map $\delta:A\to B$, is
$|\delta|:|A|\to|B|$ and is obtained by 
$f(x)=\underset{v\in V(C)}{\sum}\alpha_vf(v)$.

A \emph{subdivision} of a simplicial complex $C$ is a simplicial
complex $C'$ such that~:
\begin{enumerate}[(1)]
  \item the vertices of $C'$ are points of $|C|$,
  \item for any $\sigma'\in C'$, there exists $\sigma\in C$ such that
    $|\sigma'|\subset|\sigma|$;
  \item $|C|=|C'|$.
\end{enumerate}

Let $C$ be a chromatic complex of dimension 1, its \emph{standard chromatic subdivision}
$\text{Chr~}C$ is obtained by replacing each simplex $\sigma\in C$ by its
chromatic subdivision. See \cite{HKRbook} for the general definition
of the chromatic subdivision of a simplex, we present here only
the chromatic subdivision of a segment $[0,1]\subset\R$ whose vertices are colored $\b$
and $\n$ respectively.
The subdivision is defined 
as the chromatic complex consisting of 4 vertices at position
$0$, $\frac{1}{3}$, $\frac{2}{3}$ and $1$; and colored $\b$, $\n$, $\b$, $\n$ respectively. The edges are the 3 segments
$[0,\frac{1}{3}]$, $[\frac{1}{3}, \frac{2}{3}]$ and $[\frac{2}{3},1]$.
The geometric realization of the chromatic subdivision of the segment $[0,1]$
is identical to the segment's.

If we iterate this
process $m$ times we obtain the $m$\textsuperscript{th} chromatic subdivision
denoted $\text{Chr}^m~C$.
Figure~\ref{fig:sub_segment}
shows $\text{Chr~}[0,1]$ to $\text{Chr}^3~[0,1]$.

\begin{figure}[ht]
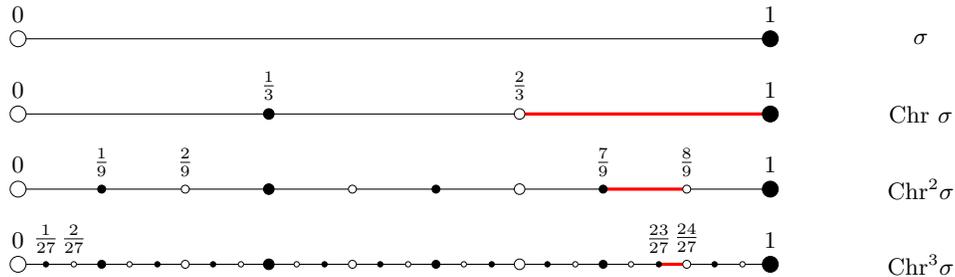

  \centering
  \includestandalone{figures/sub_segment}
  \caption{Chromatic subdivision of the segment.\\
    The correspondence with executions (bolder simplices) is explained at the end of section
  \ref{sub:topo_rep_of_dist_syst}.
  \label{fig:sub_segment}}
\end{figure}

\subsection{Topological representation of distributed systems}
\label{sub:topo_rep_of_dist_syst}

As shown from the celebrated Asynchronous Computability Theorem
\cite{HS99}, it is possible to encode the global state of a
distributed system in simplicial complexes. First we give the
intuition for the corresponding abstract simplicial complex.

We can represent a process in a given state by a vertex composed of a color and a value~:
the \emph{identity} and the \emph{local state}.
A \emph{global configuration} is an edge whose
vertices are colored $\b$ and $\n$.

The set of input vectors of a distributed task can thus be represented by a
colored simplicial complex $\I$ (associating all possible global initial configurations).

\begin{remark}
The vertices that belongs to more than one edge illustrate
the uncertainty of a process about the global state of a distributed system. In other words, a local state can be common
to multiple global configurations, and the process does not know in which
configuration it belongs. We have a topological (and geometrical)
representation of theses uncertainties.

For example, Figure~\ref{fig:ex_incertitude} shows a very simple graph where
each colored vertex is associated with a value (0 or 1). The vertex in the
middle is common to both edges; it represents the uncertainty of the process
$\n$ concerning the value of $\b$, \ie $\n$ doesn't know if it is in the global
configuration $(0,0)$ of $(0,1)$

\begin{figure}[ht]
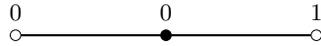

  \centering
  \includestandalone{figures/ex_incertitude}
  \caption{Example of a simplicial complex with uncertainty}
  \label{fig:ex_incertitude}
\end{figure}

\end{remark}

In the same way than $\I$, we construct (for a distributed task) the output
complex $\O$ that contains all possible output configurations of the processes.
For a given problem, it is possible to construct a relation $\Delta\subset\I\times\O$ that is chromatic and relates the input
edges with the associated possible output edges. So, any task can
be topologically represented by a triplet $(\I,\O,\Delta)$.

\begin{figure}[ht]
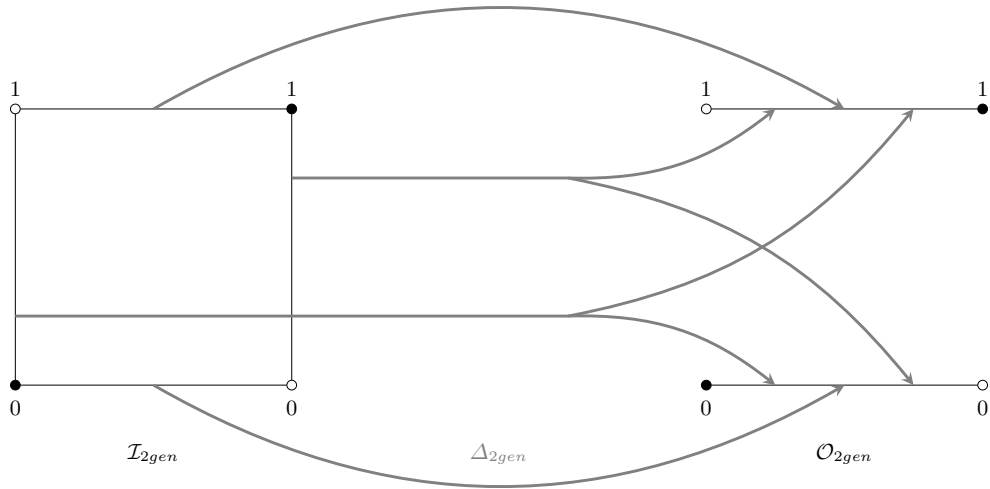

  \centering
  \includestandalone[width=0.8\textwidth]{figures/cons_bin_2proc}
  \caption{Representation of the Consensus Task with 2 processes}
  \label{fig:cons_bin_2proc}
\end{figure}

For example, the
Binary Consensus task with two processes, noted $(\I_{2gen},\O_{2gen},\Delta_{2gen})$,
is shown in Figure~\ref{fig:cons_bin_2proc}.
The input complex $\I_{2gen}$, on the left-hand side, consists of a square. Indeed, there are only four
possible global configurations given that the two process can only be in two
different initial states (proposed value 0 or 1).
The output complex, on the right-hand side, has only two edges
corresponding to the valid configurations for the Consensus (all 0 or all 1).
Finally $\Delta$ maps the input configuration with the possible output ones,
according to the \emph{validity} property of the Consensus.
\bigskip

Any protocol can be encoded as a full information protocol
\footnote{since the full information
protocol send all possible information, 
the computations can be emulated provided it is allowed to send so much information.}
any local state is related to what is called the view of the execution.
A protocol simplex is a global configuration such that there exists an
execution of the protocol in which the processes end with theses states. The
set of all theses simplices forms the \emph{protocol complex} associated to an
input complex, a set of executions and an algorithm. Given any algorithm, it
can be redefined using a full-information protocol, the protocol
complex thus depends only on the input complex and the set of executions.

Given an input complex $\I$, we construct the \emph{protocol complex at step $r$}
by applying  $r$ times to each simplex $\sigma$
the chromatic subdivision shown in
Section~\ref{sub:def_topo} and Figure~\ref{fig:sub_segment}.

\bigskip
The input complex of the Consensus and the
first two steps of the associated protocol complex are shown
in Figure~\ref{fig:cpx_proto_cons32}.

\begin{figure}[ht]
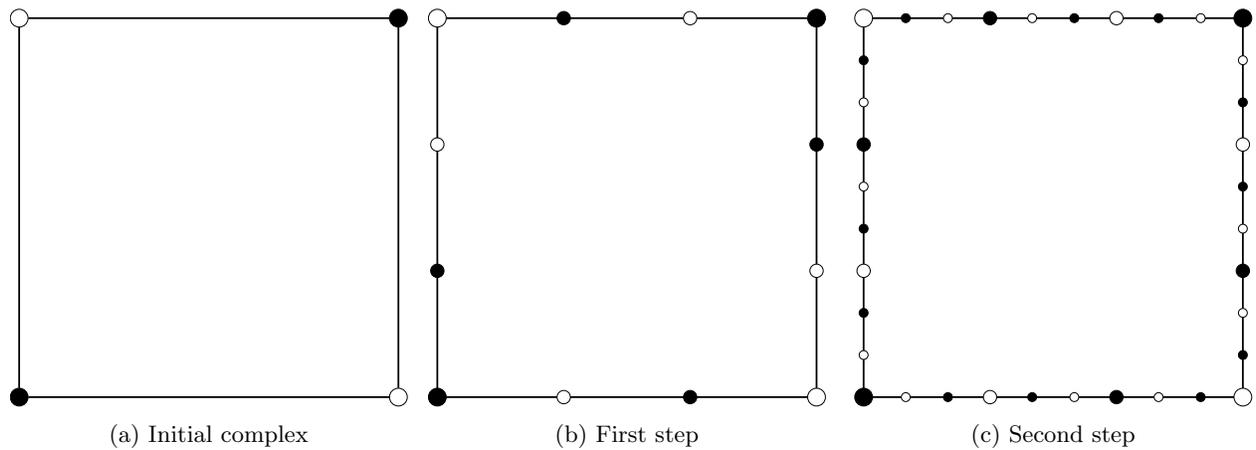

  \centering
  \begin{subfigure}{.32\textwidth}
    \includestandalone[width=\linewidth]{figures/cpx_init_2gen}
    \caption{Initial complex}
  \end{subfigure}%
  \hfill
  \begin{subfigure}{.32\textwidth}
    \includestandalone[width=\linewidth]{figures/cpx_proto1_2gen}
    \caption{First step}
  \end{subfigure}
  \hfill
  \begin{subfigure}{.32\textwidth}
    \includestandalone[width=\linewidth]{figures/cpx_proto2_2gen}
    \caption{Second step}
  \end{subfigure}
  \caption{Protocol complex of the initial, first and second rounds}
  \label{fig:cpx_proto_cons32}
\end{figure}

For example, let $w_0=\lblanc\lok\lnoir^\omega$. $w$ corresponds to the following
execution~:
\begin{itemize}
  \item[--] first, the message from $\b$ is lost;
  \item[--] $\b$ and $\n$ receive both message;
  \item[--] the message from $\n$ is lost;
  \item[--] ... (this last round is repeated indefinitely)
\end{itemize}

This can be represented as a sequence of simplices.
Each infinite sequence of edges ($\sigma_0,\sigma_1,\ldots$) with
$\sigma_{i+1}\in \text{Chr}\sigma_i$, and $\sigma_0$ the initial configuration,
corresponds to a unique scenario and vice versa.
Consider Figure~\ref{fig:sub_segment}, at $[\frac{2}{3},1]$
the thick red simplex corresponds to $\lblanc$.
Then, at $[\frac{7}{9},\frac{8}{9}]$, this corresponds to $\lblanc\lok$.
Finally, at $[\frac{23}{27},\frac{24}{27}]$
the thick red simplex of $\text{Chr}^3~\sigma$ 
corresponds to the execution $\lblanc\lok\lnoir$.

For a given finite execution, the embedding is exactly
given by the normalized index $ind_n$.
Thus, when the initial simplex is $[0,1]$ ($\b$ initial value is 0, $\n$
is 1) for an infinite execution, the
corresponding sequence of simplices will converge to a point of $[0,1]$.
At the limit, the vertex' convergence point is given by $\overline{ind}$.
For example, $w_0$ converges to $1/9$.

Without loss of generality, in the rest of the paper, we will describe
what happens on the segment $[0,1]$ instead of the whole
complex. Indeed, the behaviour of subdivision of initial segments is
identical (through a straightforward isometry for each segment) as
this behaviour does not depend on the initial values. The "gluing" at
``corners'' is always preserved since it corresponds to state obtained
by a process when it receives no message at all.

Given $x\in [0,1], i_\blanc,i_\noir\in\{0,1\}$, we denote by
$geo(x,i_\blanc,i_\noir)$ the point in the geometrical realization
that corresponds to $x$ in the simplex
$[(\blanc,i_\blanc),(\noir,i_\noir)]$, namely $x(\blanc,i_\blanc) + (1-x)(\noir,i_\noir)$.

Given $L\subset\Gamma^\omega$, we denote $|\mathcal
C^L|=\{geo(\overline{ind}(w),i_\blanc,i_\noir)\mid \exists w\in L,
i_\blanc,i_\noir\in\{0,1\}\}$. Note that it is possible to define the
limit of the geometric realizations this way, but that there are no
sensible way to define a geometric realization directly for the corresponding abstract simplicial
complex. See Section \ref{contrex}.

\bigskip

\subsection{Topological Characterization}

The following definition is inspired by \cite{GKM14}.
\begin{definition}
  Let $C$ be a colored simplicial complex.
  A \emph{terminating subdivision} $\TS$ of $C$ is a (possibly
  infinite) sequence of colored simplicial complexes
  $(\Sigma_k)_{k\in\N}$ such that $\Sigma_0=\emptyset$, and
  for all $k\geq1$ $\Sigma_k\subset Chr^k C$, and
  $\cup_{i\leq k} \Sigma_i$ is a simplicial complex for all $k$.
\end{definition}

The intuition for this definition is as follows.  It is well known
that (non-terminated) runs of length $r$ in $\Gamma^\omega$ with
initial values encoded as $C$ are represented by a protocol complex
that is the chromatic subdivision $Chr^k C$. This definition refines
the known correspondence by looking at the actual runs for a given
algorithm. When the algorithm stops, the protocol complex should
actually be no more refined. For a given algorithm, we end up with
protocol complexes that are of a special form : where, possibly, the
level of chromatic subdivision is not the same everywhere.  We will
prove later that those resulting complexes are exactly terminating
subdivisions.  Or to say it differently, terminating subdivisions are
the form of protocol complexes for non-uniformly terminating
algorithms (that is for algorithms that do not stop at the same round
for all possible executions).

From the correspondence between words and simplexes of the chromatic
subdivision, we can see that the corresponding set of words is an
anti-chain for the prefix order relation.

An edge of $\Sigma_k$ for some $k$ is called a \emph{stable
  edge} in the subdivision $\TS$.
The union
$\cup_k\Sigma_k$ of stable edges in $\TS$ forms a
colored simplicial complex, and a stable edge can only intersect
another given stable edge at an extremity (a single vertex).

For $r\in\N$, we denote by $K_r(\TS)$ the complex $\cup_{k\leq r}\Sigma_k$.
We denote by $K(\TS)$ the complex $\cup_k\Sigma_k$; it possibly has infinitely many
simplices. Observe that the geometric realization $|K(\TS)|$ can be
identified with a subset of $|C|$.
For example, Figure~\ref{fig:ex_sub_ter} shows a terminating subdivision of
$[0,1]$ up to round $3$.

\begin{figure}[ht]
  \centering
  \includestandalone[width=1\textwidth]{figures/ex_sub_ter}
  \caption{  \label{fig:ex_sub_ter}Example of a terminating subdivision.\\
  The bolder simplices show the stable edges (colored red at first appearance). For the correspondence of simplices with executions, see section \ref{sub:topo_rep_of_dist_syst}}
\end{figure}

There is an example in Fig.~\ref{fig:ex_sub_ter} of a terminating subdivision for $\Gamma^\omega$.

The following definition expresses when a terminating subdivision
covers all considered scenarios of a given $L$.  The intuition is that
every scenario of $L$ should eventually land into a simplex of $\TS$.
In terms of words, this means that all words of $L$ have a prefix in
the corresponding words of the simplexes of $\TS$.

\begin{definition}
A terminating subdivision $\TS$ of C is \emph{admissible} for 
$L\subseteq\Go$ if for any scenario $\rho\in L$ the corresponding
sequence of edges $\sigma_0,\sigma_1,\ldots$ is such that there exists
$r>0$, $|\sigma_r|$ is a stable edge in $K(\TS)$.
\end{definition}

\label{sub:topo_carac}

We can now state and prove our new characterization theorem. First, for
simplicity, notice that the output complex Consensus
$O_{2gen}$ has two connected components representing 0 and 1. Thus we can
identify it to the set $\{0,1\}$ and define the relation
$\Delta'_{2gen}\subset\I_{2gen}\times\{0,1\}$ analogous to $\Delta_{2gen}$.

\begin{theorem}
  The task $T_{2gen}=(\I_{2gen},\O_{2gen},\Delta_{2gen})$ is solvable in a
  sub-model $L\subseteq\Go$ if and only if there exist a terminating
  subdivision $\Phi$ of $\I_{2gen}$ and a simplicial function
  $\delta\colon K(\Phi)\to\{0,1\}$ such that~:
  \begin{theoenum}
    \item $\Phi$ is admissible for $L$;
    \item \label{thii}
      For all simplex $\sigma\in\I_{2gen}$, if $\tau\in K(\Phi)$ is such
    that $|\tau|\subset|\sigma|$, then $\delta(\tau)\in\Delta'_{2gen}(\sigma)$;
    \item $|\delta|$ is continuous.
  \end{theoenum}
  \label{thm:GACT'2gen}
\end{theorem}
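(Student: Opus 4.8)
The plan is to establish the two implications separately, using throughout the correspondence from Section~\ref{sub:topo_rep_of_dist_syst} between a scenario $w\in\Go$ and its nested sequence of simplices $(\sigma_r)_r$ with $\sigma_{r+1}\in Chr\,\sigma_r$, and between a finite partial scenario and a simplex of $Chr^r\I_{2gen}$. I would also assume, without loss of generality, that any solving algorithm is a full-information protocol, so that the local state of a process at round $r$ is exactly its view --- a colored vertex of $Chr^r\I_{2gen}$ --- and the global configuration at round $r$ is the edge $\sigma_r$ formed by the two processes' views.

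For the necessary direction, suppose $\algo$ solves Consensus on $L$. For each scenario $w\in L$, Termination provides a minimal prefix at which both processes have halted; I take the corresponding edge to be a stable edge and let $\Sigma_k$ collect those stable edges appearing at round $k$. I define $\delta$ on a halted vertex to be the value its process decides in that local state. This is well defined and simplicial into $\{0,1\}$: within a stable edge the two decisions coincide by Agreement, and two stable edges sharing a vertex force the same value because the common process is in the same local state and hence decides identically. It then remains to verify the three conditions: admissibility is exactly Termination; condition~\ref{thii} is Validity, since a decided value is one of the initial values carried by the containing initial edge $\sigma$; and continuity of $|\delta|$ is the geometric form of Agreement, including at limit points.

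For the sufficient direction, suppose $\Phi$ and $\delta$ are given. I would define an algorithm running the full-information protocol in which a process halts as soon as its current view is a vertex of some stable edge of $K(\Phi)$, outputting the value $\delta$ assigns to that vertex. Admissibility then yields Termination, condition~\ref{thii} yields Validity, and simpliciality together with continuity yields Agreement. The delicate point is that a process decides from its \emph{local} view (a single vertex), not from the global edge: a shared vertex may be an endpoint of one stable edge while the complementary direction is still being subdivided, so the process cannot know which incident configuration is the true one. It is precisely the continuity, equivalently the local constancy, of $|\delta|$ that guarantees all incident possibilities carry the same value, so that a process may safely commit on local information and the two processes decide consistently even when they halt at different rounds.

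I expect the main obstacle to be the continuity condition and its interplay with the accumulation of stable edges. In the necessary direction, proving $|\delta|$ continuous at limit points of $\Phi$ is the crux: these limit points are the $\overline{ind}$-images of unfair scenarios, and continuity there must reconcile the decisions of scenarios that become arbitrarily close under the index embedding --- exactly the fair scenarios and special pairs governing Theorem~\ref{thm:FG11}. The pitfalls noted in Remark~\ref{subtle} about geometric realizations of infinite abstract complexes must be handled carefully at this stage. In the sufficient direction the same continuity is what legitimizes purely local decisions, and I expect the careful bookkeeping to be showing that admissibility forces \emph{both} processes --- not merely one --- to eventually halt, using the fact (Proposition~\ref{indexes}) that their two view-indices always differ by one, so that one process halting drives the other into a stable edge within one further round.
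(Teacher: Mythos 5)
Your necessary direction follows the paper's proof essentially verbatim: stable edges are the configurations where both processes have decided, $\delta$ is the decided value, and the three conditions come from Termination, Validity, and an Agreement-based local-constancy argument (the paper makes the last step concrete by exhibiting an explicit modulus $\eta_x=\min\{1/3^{r+1}\mid \exists r,\exists y,\ \{x,y\}\in V(\Sigma_r)\}$; you correctly identify this as the delicate point but do not carry it out). That half is fine as a sketch.

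The sufficient direction, however, has a genuine gap in the halting rule. You propose that a process halts ``as soon as its current view is a vertex of some stable edge of $K(\Phi)$.'' A view-vertex of $Chr^r(\I)$ is shared by two candidate global configurations, and only one of them need be stable. So a process \blanc can halt because the edge on the \emph{far} side of its vertex lies in $\Sigma_r$, while the true configuration $\sigma_r$ is the other incident edge, which keeps being subdivided. From that moment the run is derailed: \noir receives only silence, its normalized index freezes at the far endpoint $z$ of $\sigma_r$, and the partial scenarios it subsequently experiences are prefixes of an unfair word that need not belong to $L$ --- so admissibility gives no stable edge incident to any of its future views, and \noir may never terminate. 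Your claim that ``one process halting drives the other into a stable edge within one further round'' is exactly the step that fails; Proposition~\ref{indexes} only bounds the distance between the two indices, it does not produce a stable edge at $\noir$'s frozen position. Worse, Agreement can also fail: $|\sigma_r|\cap|K(\Phi)|$ need not be connected (a forbidden fair scenario may converge to an interior point of $|\sigma_r|$), so \blanc's prematurely chosen value and the value of the component in which $z$ sits can differ. The paper avoids both problems by decoupling halting from membership in a stable edge: its predicate $Finished(r,x)$ asks that the \emph{closed} ball $\bar{B}(x,\frac{1}{3^r})$ --- radius equal to the current round's uncertainty about the other process --- be contained in $B(y,\eta(y))$ for some stable vertex $y$, where $\eta$ is a modulus extracted from the continuity of $|\delta|$ and chosen so that overlapping balls force equal outputs. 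This guarantees that whichever of the two incident configurations is real, and wherever the other process's frozen index lies, both decisions fall in the same $\eta$-ball and hence coincide, and it lets the second process terminate once $1/3^{r}$ is small enough, without ever needing its own view to become stable. To repair your argument you would need to replace your local membership test by some such metric condition derived from continuity.
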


\begin{proof}
  \textbf{Necessary condition.}
  Suppose we have an algorithm $\mathscr{A}$ for solving the Binary
  Consensus task, we will
  construct a terminating subdivision $\Phi$ admissible for $L$ and a function
  $\delta$ that satisfies the conditions of the theorem.

  When running $\mathscr{A}$, we can establish which nodes of the protocol
  complex decide a value by considering the associated state of the process.
  Intuitively, $\Phi$ and $\delta$ are built as the following~:
  in each level $r$, we consider all runs of length $r$ in $L$.
  Each run yields a simplex, if the two nodes of the simplex have 
  decided a (unique) value $v$, we add this simplex to $\Sigma_{r}$ and set
  $\delta(\sigma)=v$.

  Formally, let $\mathscr{C}^L(r)$ be the protocol complex of a full
  information protocol subject to the language $L$ at round $r$ and
  $V(\mathscr{C}^L(r))$ its set of vertices.
  Given a vertex $x$,
  let $val(x)$ be the value decided by the corresponding process in
  the execution that leads to state $x$.  And $\forall r\geq0$ define
  \begin{align*}
    \Sigma_{r}&=\{\{x,y\}\in \mathscr{C}^L(r)\mid
    \text{$x$ and $y$ have both decided and at least one has just
      decided
    in round }r\}\\ 
    \delta(x)&=val(x)\quad\forall x\in V(\Sigma_r)\\
  \end{align*}

  The function $\delta$ is well defined since it depends only on the
  state encoded in $x$ and in $\Sigma_r$ all vertex have decided.
  
  For all $\{x,y\}\in\Sigma_r$, note that $val(x)=val(y)$ because $\mathscr{A}$
  satisfies the \emph{agreement} property.

  By construction for $\Phi$, it is a terminating subdivision.
  Furthermore, $\Phi$ is admissible for $L$.
  Since $\mathscr{A}$ terminates subject to $L$,
  all processes decide a value in a run of $L$, so all
  nodes of the complex protocol restricted to $L$ will be in $\Phi$
  (by one of their adjacent simplexes).

  The condition~\ref{thii} is also satisfied by
  construction because $\mathscr{A}$ satisfies the \emph{validity} property,
  \ie $val(x)\in\Delta'_{2gen}(\sigma)$.

  We still have to prove that $|\delta|$ is continuous, in other words for all
  $x\in|K(\Phi)|$, we must have
  $$\forall\varepsilon>0\quad\exists\eta_x>0\quad\forall y\in|K(\Phi)|\quad
  |x-y|\leq\eta_x\Rightarrow ||\delta|(x)-|\delta|(y)|<\varepsilon$$
  This property for continuity says that when two points $x,y\in|K(\Phi)|$ are
  close, their value of $|\delta|$ is close. When $\varepsilon$ is small (\eg
  $\varepsilon<1/2$), $||\delta|(x)-|\delta|(y)|<\varepsilon$ implies that
  $|\delta|(x)=|\delta|(y)$. This is because $|\delta|$'s co-domain is discrete.
  When $\varepsilon$ is large, the property is always verified and thus is not
  much of interest. We can thus reformulate it without considering
  $\varepsilon$~: we must show that $\forall x\in|K(\Phi)|$

  \begin{eqnarray}\label{continu}
  \exists\eta_x>0\quad\forall y\in|K(\Phi)|\quad |x-y|\leq\eta_x\Rightarrow
  |\delta|(x)=|\delta|(y)
  \end{eqnarray}
  
  In defining $\eta_x$ as followed, we have the continuity condition for all
  $x\in V(K(\Phi))$~:

  $$\eta_x=\min\{\frac{1}{3^{r+1}}\mid \exists r\in\N, \exists y,\{x,y\}
  \in V(\Sigma_r)\}.$$

  Since there exists at least one ($\algo$ terminates for any
  execution in $L$) and at most two such $y$, the minimum is well
  defined for all $x\in V(K(\Phi))$. Moreover, we remark that since
  the geometric realization of simplices of $\Sigma_r$ are of size
  $\frac{1}{3^r}$, the ball centered in $x$ and of diameter $\eta_x$
  is included in the stable simplices and the function $\delta$ is
  constant on this ball (by agreement property).
  
  Let $x,y\in V(K(\Phi))$, let $z\in[x,y]$, we define
  $\eta_z=\min(d(x,z),d(y,z))$.
  By construction, using such function $\eta$,
  for all $z\in|K(\Phi)|$, Proposition (\ref{continu}) is satisfied. 

  \bigskip

  \textbf{Sufficient condition.}
  Given a terminating subdivision $\Phi$ admissible for $\I_{2gen}$ and a
  function $\delta$ that satisfies the conditions of the theorem, we present an
  algorithm $\mathscr{A}$ that solves Consensus.

  First, we describe how to obtain a function $\eta$ from the
  continuity of $|\delta|$.  For any $x\in V(K(\Phi))$, there exists
  $\eta(x)$ such that for any $y\in|K(\TS)|$, when $|x-y|\leq\eta(x)$, we
  have $\delta(y)=\delta(x)$.
  Notice we can choose $\eta$ such that $\forall x\in|\I| \forall y_1,y_2\in|K(\TS)|,
  |y_1-x|\leq\eta(y_1)\wedge|y_2-x|\leq\eta(y_2) \Rightarrow |\delta|(y_1)=|\delta|(y_2).$
  Now, we show how to extend the definition to
  $|K(\Phi)|$.  Consider $w\in |\Phi|\backslash V(K(\Phi))$ and denote $x,y$
  the vertices of $K(\Phi)$ that defines the segment to which $w$
  belongs. We define $\eta(w)=\min\{\eta(x),\eta(y)\}$.

  \bigskip
  We recall that $B(z,t)$ (resp. $\bar{B}(z,t)$) is the open (resp. closed) ball of center $z$ and radius $t$.
  We define a boolean function $Finished(r,x)$ for $r\in\N$ and $x\in |\I|$ that is true when
  $\exists y\in V(Chr^r(\I)), y\in |K_r(\Phi)|, \bar{B}(x,\frac{1}{3^r})\subset B(y,\eta(y))$. 
  
  The Consensus algorithm is described in
  Algorithm~\ref{alg:algo_cons_bin}, using the function $Finished$ we just
  defined from $\eta$.
  Notice that $\eta$ is fully defined on $|K(\Phi)|$ and that the existential condition at line \ref{algline:if} is over a finite subset of $|K(\Phi)|$.

As in Algorithm~\ref{consalgo},
it has messages always of the same type. They
have two components, the first one is the initial bit, named
$init$. The second is an integer named $ind$. Given a message $msg$,
we note $msg.init$ (resp. $msg.ind$) the first (resp. the second)
component of the message. The computation of the index is similar.
We maintain in an auxiliary variable $r$ the current round (line \ref{diamsub}).

The halting condition is now based upon the position of the geometric
realization $geo$ of the current round with regards to the terminating
subdivision $K(\Phi)$ and the $\eta$ function : we wait until the
realization is close enough of a vertex in $|K(\Phi)|$ and the
corresponding open ball of radius $\eta$ contains a neighbourhood of
the current simplex.  Note that when $initb$ or $initw$ is still
$null$, this means that we are at the corners of the square \I.

  \begin{algorithm}
  \KwData{function $\eta$}
  \KwIn{$init\in\{0,1\}$}

  $r=0$\;
  \eIf{$\proc=\noir$}{ind=1\;
  initw=null\;initb=init\;}{ind=0\;initw=init\;initb=null\;}
  \Repeat{$\exists y\in V(Chr^r(\I)), y\in |K_r(\Phi)|, \bar{B}(geo(ind/3^r,initw,initb),\frac{1}{3^r})\subset B(y,\eta(y))$}{\label{algline:if}
    msg = (init,ind)\;
    send(msg)\;
    msg = receive()\;
    \eIf(// message was lost){msg == null}
        {$ind = 3*ind$\;}
        {$ind = 2*msg.ind+ind$\;
          \eIf{$\proc=\noir$}{$initw = msg.init$\;}{$initb = msg.init$\;}}
    $r=r+1$\;
  }
  \KwOut{$|\delta|(y)$\label{diamsub}}
  \caption{Algorithm $\mathscr{A}_{\eta}$ for the binary consensus with two
    processes for process $p$ where $\eta$ is a function $[0,1]\to[0,1]$.
    geo(x,w,b) is the embedding function into the geometric realization.
  \label{alg:algo_cons_bin}
  }
  \end{algorithm}
  
  We show that the algorithm solves Consensus. Consider an execution
  $w$, we will prove it terminates.  The fact that the output does not
  depends on $y$ comes from the choice of $\eta$.  The properties
  Agreement and Validity then come immediately from condition
  \ref{thii} on $\delta$ and $\Delta'_{2gen}$.

  First we prove termination for at least one process. Assume none halts.

  The admissibility of $\Phi$ proves that there is a round $r_0$ for
  which the simplex corresponding to the current partial scenario
  $w_{\mid r_0}$ is a simplex $\sigma$ of $\Phi$. Denote $r_1\geq r_0$ an
  integer such that $\frac{1}{3^{r_1}}<\eta(y)$ with $y\in|\sigma|$.
  Since from round $r_0$, all future geometric realizations will remain in
  $|\sigma|$, we have, at round $r_1$, that $Finished$ is
  true. A contradiction.

  Now, from the moment the first process halts with an index with geometric
  realization $x$ because of some $y$, if the other one has not halted
  at the same moment, then this one will receive no other message from
  the halted process, and its $ind$ variable, with geometric
  realization $z$, will remain constant forever. The closed ball
  centered in $x$ has a neighbourhood inside the open ball centered in
  $y$, therefore there exists $r_2$ such that $\frac{1}{3^{r_2}}$ is
  small enough and $Finished(r_2,z)$ is true.

\end{proof}

We say that the corner of the squares in Fig.~\ref{fig:cpx_proto_cons32} are \emph{corner
  nodes}.  If we compare the two algorithms, we can see that the
differences are in the halting condition and the decision value. In
Algorithm 1, once a forbidden execution $w$ has been chosen, the
algorithm runs until it is far away enough of a prefix of $w$ to
conclude by selecting the initial value of the corner node which is on
the same side as the prefix.

Algorithm 2 is based on the same idea but somehow is more flexible
(more general). If there are many holes in the geometric realization,
it is possible to have different chosen output values for two
different connected components.  Of course, the connected components
that contains the corner nodes have no liberty in choosing the decision value.

\section{About the two Characterizations}
\label{sub:link_with_the_combinatorial_characterization}

In this section, we explain how the combinatorial and the topological
characterizations match. This is of course convenient but the more important
result will be to see that the topological characterization permits to
derive a richer intuition regarding the characterization, in
particular about the status of the special pairs.
We will also illustrate how the Theorem 6.1 of \cite{GKM14} does not
handle correctly the special pairs by showing that in this
cases, combinatorial and geometric simplicial protocol complexes do
differ in precision to handle Consensus computability.

\smallskip

For any adversary $L\subseteq\Go$, $L$ can be either an
obstruction or  solvable for Consensus. In other words, the set of
sub-models $\mathcal{P}(\Go)$ is partitioned in two subset~: obstructions and
solvable languages. Theorem~\ref{thm:FG11} explicitly describes the
solvable languages and classes them in four families. In contrast,
Theorem~\ref{thm:GACT'2gen} gives necessary and sufficient conditions for a
sub-model to be either an obstruction or not.

We first give an equivalent version of Theorem~\ref{thm:GACT'2gen}.

\begin{theorem}
  The task $T_{2gen}$ is solvable in $L\subseteq\Go$ if and only if
  $|\mathcal C^L|$ is not connected.
  \label{thm:cons_ssi_k_deco}
\end{theorem}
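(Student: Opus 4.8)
The plan is to derive this equivalence directly from the already-proved Theorem~\ref{thm:GACT'2gen}, using only the elementary fact that a continuous map into a discrete space is constant on connected components. Throughout I would identify $|\mathcal C^L|$ with $\pi^{-1}(\overline{ind}(L))$, where $\pi\colon|\I_{2gen}|\to[0,1]$ is the parameter (folding) map sending $geo(x,i_\blanc,i_\noir)$ to $x$. This $\pi$ is continuous, and under the corner gluings the four initial edges form a circle on which $\pi$ zig-zags between the two $\noir$-corners (at $x=0$) and the two $\blanc$-corners (at $x=1$); in particular $\pi^{-1}(\{p\})$ disconnects the circle for every $p$.

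For the forward direction, assume $T_{2gen}$ is solvable in $L$ and take the terminating subdivision $\Phi$ and the simplicial $\delta\colon K(\Phi)\to\{0,1\}$ furnished by Theorem~\ref{thm:GACT'2gen}. First I would observe that admissibility of $\Phi$ forces $|\mathcal C^L|\subseteq|K(\Phi)|$: once a scenario's prefix lands in a stable edge, all finer edges of that scenario stay inside it, so its limit point $\overline{ind}(w)$ lies in that stable edge. Hence $|\delta|$ restricts to a continuous function $|\mathcal C^L|\to\{0,1\}$. By the validity embodied in condition~\ref{thii}, $|\delta|$ takes the value $0$ at every point of $|\mathcal C^L|$ on the all-$0$ edge and the value $1$ at every point on the all-$1$ edge; both of these sets are nonempty since any $w\in L$ contributes $geo(\overline{ind}(w),0,0)$ and $geo(\overline{ind}(w),1,1)$. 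The two clopen preimages $|\delta|^{-1}(0)$ and $|\delta|^{-1}(1)$ therefore split $|\mathcal C^L|$ into two nonempty pieces, so it is not connected.

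For the converse I would first note that $|\mathcal C^L|$ is disconnected if and only if $\overline{ind}(L)\subsetneq[0,1]$: if $\overline{ind}(L)=[0,1]$ then $|\mathcal C^L|$ is the whole circle $|\I_{2gen}|$, which is connected, whereas if some $p$ is missed then, since $\pi$ folds the circle so that $\pi^{-1}(\{p\})$ separates it, the subspace $\pi^{-1}(\overline{ind}(L))$ meets at least two of the resulting arcs. Given such a missing $p$, pick $w\in\overline{ind}^{-1}(p)$; then $w\notin L$, and crucially no $v\in L$ can form a special pair with $w$, for a special pair has equal limit index and would force $p\in\overline{ind}(L)$. This is exactly the hypothesis under which Algorithm~\ref{consalgo} and Lemma~\ref{diff2} apply: every $v\in L$ eventually reaches a round $r$ with $|ind(v_{\mid r})-ind(w_{\mid r})|\ge3$. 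Declaring each scenario's prefix at that first round to be a stable edge defines a terminating subdivision $\Phi$ admissible for $L$; setting $\delta$ to the forced value $0$ (resp.\ $1$) on the all-$0$ (resp.\ all-$1$) edges and, on the two mixed edges, to the value dictated by the side of $p$ and matched to these forced corner values, yields a simplicial $\delta$ whose only transition occurs across the gap at $p$, which is absent from $|\mathcal C^L|$. Thus $\Phi$ and $\delta$ satisfy all conditions of Theorem~\ref{thm:GACT'2gen}, and $L$ is solvable.

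The hard part is this converse construction: translating the purely topological disconnectedness of $|\mathcal C^L|$ into the combinatorial ``$\ge 3$'' buffer needed to realize an actual terminating subdivision. This is precisely where special pairs are decisive — disconnectedness guarantees that the witnessing missing point $p$ has \emph{all} of its $\overline{ind}$-preimages outside $L$ (in particular both members of any special pair converging to $p$), which is exactly what stops a scenario of $L$ from shadowing $w$ forever and makes the halting rule well-founded. The remaining obligation, namely that the assembled $\delta$ is genuinely continuous at the corner nodes where validity forces the decision value, is a routine check against the corner gluings described above.
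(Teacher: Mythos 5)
Your proof is correct and follows essentially the same route as the paper's: the forward direction extracts from Theorem~\ref{thm:GACT'2gen} a continuous surjection $|\delta|\colon|\mathcal C^L|\to\{0,1\}$, and the converse picks a parameter value $p\notin\overline{ind}(L)$ and assembles the terminating subdivision from the prefix edges that stay clear of it, deciding according to the side of $p$. Your detour through Lemma~\ref{diff2} and special pairs to justify admissibility, and your explicit surjectivity check via validity on the monochromatic edges, are just more detailed versions of steps the paper's own proof leaves implicit.
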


\begin{proof}

  ($\Rightarrow$) If $T_{2gen}$ is solvable in $L$, let $\Phi$ and $\delta$
  as described in Theorem~\ref{thm:GACT'2gen}.
  From admissibility, we have that $|\mathcal C^L|\subset|K(\Phi)|.$
  Now $|\delta|$ is a continuous surjective function from
  $|\mathcal C^L|$ to $\{0,1\}$, this implies that the domain $|\mathcal C^L|$ is not
  connected since the image of a connected space by a continuous
  function is always connected.

  ($\Leftarrow$) With $|\mathcal C^L|$ disconnected, we can associate
  output value to connected components in such a way that $\Delta'_{2gen}$ is
  satisfied. Consider the segment $[0,1]$, there exists $z\in[0,1]$
  and $z\notin |\mathcal C^L|$.

  We define $\Phi$ in the following way :
  $$\Sigma_k=\{S\mid
  S=[\overline{ind}(w),\overline{ind}(w)+\frac{1}{3^k}], w\in\Gamma^k,
  \exists w'\in \Gamma^\omega, ww'\in L
  \mbox{ and } \forall i_w,i_b\in\{0,1\}\;geo(z,i_w,i_b)\notin |S|\}$$

  We now denote $\delta$ the function $K(\Phi)\to \{0,1\}$ such that
  for $v\in[0,1]$, for $X=geo(v,i_w,i_b)$, $\delta(X)=i_w$ if $v<z$
  and $\delta(v)=i_b$ otherwise.

  We now check that these satisfies the conditions of
  Theorem~\ref{thm:GACT'2gen}.  Admissibility comes from the fact
  that, by construction of $z$, there are no runs in $L$ that converge
  to $z$, therefore for any run $w\in L$, at some point,
  $\frac{1}{3^k}$ is small enough such that there is an edge in
  $\Sigma_k$ that contains $\overline{ind_{w_{|k}}}$ and not $z$.
  
  The function $|\delta|$ is continuous and has been defined such that condition~\ref{thii} is also clearly satisfied.

\end{proof}

\bigskip

We recall the definitions of $Fair$ and $SPair$ languages, and the admissible language
families defined in Section~\ref{sub:comb_charac}.
\begin{align*}
  Fair(\Go)&=\Go\backslash\{xy\mid x\in\Gamma^*\quad
  y\in\{\lall,\lblanc\}^\omega\cup\{\lall,\lnoir\}^\omega\} \\
  SPair(\Go)&=\{(w,w')\in\Go\times\Go \mid w\neq w', \quad \forall r\in\N \quad
  |ind(w_{|r})-ind(w'_{|r})|\leq1\}
\end{align*}

A message adversary is solvable if it is one of the following (non-exclusive)
forms
\begin{enumerate}
  \item $\F_1 = \{L \mid \exists f\in Fair(\Go), f\notin L\}$
  \item $\F_2 = \{L \mid \exists (w,w')\in SPair(\Go), w,w'\notin L\}$
  \item $\F_3 = \{L \mid \lnoir^\omega\notin L\}$
  \item $\F_4 = \{L \mid \lblanc^\omega\notin L\}$
\end{enumerate}

We give now a topological interpretation of this description.
We start with a topological characterization of special pairs.

\begin{lemma}
  Let $w,w'\in \Gamma^\omega$, $w\neq w'$,
  $\overline{ind}(w)=\overline{ind}(w')$ if and only $(w,w')\in SPair(\Go)$.
\end{lemma}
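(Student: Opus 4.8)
The plan is to prove both implications by exploiting the elementary geometric meaning of the limit index. The key fact I would establish first is that $\overline{ind}(w)$ is trapped, for every $r$, inside the closed triadic interval
$$J_r(w):=\left[\frac{ind(w_{|r})}{3^r},\,\frac{ind(w_{|r})+1}{3^r}\right]$$
of width $1/3^r$, and that these intervals are nested. From the defining recursion $ind(ua)=3\,ind(u)+(-1)^{ind(u)}\mu(a)+1$ together with $\mu(a)\in\{-1,0,1\}$, one reads off $3\,ind(u)\le ind(ua)\le 3\,ind(u)+2$; dividing by $3^{r+1}$ this gives exactly $J_{r+1}(w)\subseteq J_r(w)$. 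Since $ind_n(w_{|n})$ is the left endpoint of $J_n(w)\subseteq J_r(w)$ for every $n\ge r$, and $J_r(w)$ is closed, letting $n\to\infty$ shows $\overline{ind}(w)\in J_r(w)$ for all $r$ (the limit itself exists by the remark following Lemma~\ref{bijindex}).

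For the ($\Leftarrow$) implication, assume $(w,w')\in SPair(\Go)$, so $|ind(w_{|n})-ind(w'_{|n})|\le 1$ for all $n$. Then
$$\bigl|ind_n(w_{|n})-ind_n(w'_{|n})\bigr|=\frac{\bigl|ind(w_{|n})-ind(w'_{|n})\bigr|}{3^n}\le\frac{1}{3^n},$$
which tends to $0$. As both sequences converge, to $\overline{ind}(w)$ and $\overline{ind}(w')$ respectively, their limits must coincide, i.e. $\overline{ind}(w)=\overline{ind}(w')$. This direction is essentially immediate.

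For the ($\Rightarrow$) implication I would argue by contraposition. Suppose $w\neq w'$ but $(w,w')\notin SPair(\Go)$, so there is a round $r$ with $|ind(w_{|r})-ind(w'_{|r})|\ge 2$. Writing $k=ind(w_{|r})$ and $k'=ind(w'_{|r})$ and assuming without loss of generality $k'\ge k+2$, the two confining intervals $J_r(w)$ and $J_r(w')$ are separated by (at least) a full interval of width $1/3^r$. Concretely $\overline{ind}(w)\le (k+1)/3^r$ while $\overline{ind}(w')\ge k'/3^r\ge (k+2)/3^r$, hence $\overline{ind}(w')-\overline{ind}(w)\ge 1/3^r>0$ and the two limit indices are distinct, as required.

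The only point that needs care is the separation bookkeeping in the last paragraph: one must make sure that an index gap of $2$ at some finite level $r$ forces a gap in the limit that cannot shrink to $0$ as $r$ grows. This is precisely what the confinement $\overline{ind}(w)\in J_r(w)$ delivers, since each limit lives in its own closed width-$1/3^r$ interval and an entire empty interval of the same width lies strictly between them. Everything else is routine. Conceptually, the lemma records that special pairs are exactly the pairs of distinct scenarios whose chromatic-subdivision simplices converge to a single point of $[0,1]$, which is the geometric fact underlying Theorem~\ref{thm:cons_ssi_k_deco}.
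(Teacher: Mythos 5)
Your proof is correct, and it is more self-contained than the paper's, which disposes of this lemma in one line by citing Lemma~\ref{diff2} ``and the very definition of special pairs.'' The underlying idea is the same in both cases --- a pair is special iff the index gap never exceeds $1$, and a gap of $2$ or more at some finite round forces the limit indices apart --- but the paper's citation of Lemma~\ref{diff2} only delivers that the gap reaches $3$ at some round; it does not by itself explain why a gap at a finite level cannot be washed out in the limit (the normalized gap $d/3^r$ does shrink at subsequent rounds, by up to $2/3^{r+1}+2/3^{r+2}+\cdots=1/3^r$). Your confinement claim $\overline{ind}(w)\in J_r(w)=[\,ind(w_{|r})/3^r,\,(ind(w_{|r})+1)/3^r\,]$, proved via the nesting $J_{r+1}(w)\subseteq J_r(w)$ read off from the recursion $3\,ind(u)\le ind(ua)\le 3\,ind(u)+2$, is exactly the missing bookkeeping: with $k'\ge k+2$ the two closed intervals are separated by a full empty interval of width $1/3^r$, so $\overline{ind}(w')-\overline{ind}(w)\ge 1/3^r$. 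The $(\Leftarrow)$ direction is immediate in both treatments. As a bonus, your nested-interval argument also re-derives the convergence of $ind_n(w_{|n})$ and makes explicit the geometric picture (each scenario confined to its triadic cell) that the paper only invokes informally when identifying $\overline{ind}$ with the embedding of runs into $[0,1]$.
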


\begin{proof}
  This comes from Lemma~\ref{diff2} and the very definition of special pairs.
\end{proof}

In others words, special pairs are exactly the runs that have another
run that converges to the same geometric realization.
So removing only one member of a special pair is not enough
to disconnect $|\mathcal C^L|$.
It is also straightforward to see that removing a fair run implies
disconnection and that removing $\lnoir^\omega$ or $\lblanc^\omega$
implies disconnecting at the corners.

\subsection{Counter-Example}\label{contrex}
Now we look at Theorem~6.1 from \cite{GKM14}. The only difference is that in \cite{GKM14} $\delta$ is only required to be chromatic. Here we have shown that we need a stronger condition that is the
continuity of $|\delta|$. We explain why this is strictly stronger, ie  how it is possible to have
simplicial mapping while not having continuity of the geometric
realization.

Consider $L=\Gamma^\omega\backslash w_0$, with
$w_0=\lok\lblanc^\omega$. We define $\Phi$ as follows.

$\Sigma_1=\{[0,\frac{1}{3}],[\frac{2}{3},1]\}$
For $r\geq 2$, $\Sigma_r=\{[\frac{2}{3}-\frac{1}{3^{r-1}},\frac{2}{3}-\frac{2}{3^{r}}],[\frac{2}{3}-\frac{2}{3^{r}},\frac{2}{3}-\frac{1}{3^{r}}]\}$.

The terminating subdivision $\Phi$ is admissible for $L$ and there
exists a simplicial mapping from $K(\Phi)$ to $\{0,1\}$. We can set
$\delta(x)=0$ when $x<\frac{2}{3}$ and $\delta(x)=1$ otherwise.

$K(\Phi)$ has two connected components $[\frac{2}{3},1]$ on one side,
and all the other segments on the other side. The function $\delta$ is
therefore simplicial since there is no $[z,\frac{2}{3}]$ interval in
$K(\Phi)$. 

So such a $L$ satisfies the assumptions for Theorem~6.1 of
\cite{GKM14}.  To see that it does not satisfy the assumptions of
Theorem~\ref{thm:GACT'2gen}, we remark that $\mid K(\Phi)\mid$ is
$[0,1]$ and the $\delta$ function we define above does not have a geometric realization that is continuous
and moreover there is no way to define a continuous surjective
function from $[0,1]$ to $\{0,1\}$. So the statements of the Theorem
are not equivalent.  To correct Theorem~6.1 of \cite{GKM14}, it is
needed to add an assumption, that is that $L$ has to be ``closed'' for
special pairs : either both members belong to $L$ or none, this has
been confirmed by the authors \cite{K15}.

The simplicial complex $K(\Phi)$ has two connected components when
seen as an abstract simplicial complex, however, it is clear that the
geometric realization of $K(\Phi)$ is the entire interval $[0,1]$, it
has one connected component.

\section{Conclusion}
To conclude, we have that the two Theorems~\ref{thm:FG11} and
\ref{thm:GACT'2gen} are indeed equivalent, even if their formulation
is very different.  We emphasize that the topological characterization
with Theorem~\ref{thm:cons_ssi_k_deco} gives a better explanation of
the results primarily obtained in \cite{FG11}.  The different cases of
Theorems~\ref{thm:FG11} are unified when considered topologically.
Note also that in the general case, the topological reasoning should
be done on the continuous version of simplicial complexes, not on the
abstract simplicial complexes.  From Section \ref{contrex}, we see we
have a simplicial complex $K(\Phi)$ that is disconnected when seen as
an abstract simplicial complex, but whose embedding (geometric
realization) makes for a connected space.

This study of the solution to the Consensus problem in the general
case for two processes is another argument in favor of topological
methods in Distributed Computability.

We are aware of \cite{consensus-epistemo} that appeared between
revisions of this paper. We underline that Theorem~\ref{thm:GACT'2gen}
cannot be obtained in a straightforward way from
\cite{consensus-epistemo}.

\end{document}